\documentclass[final,leqno]{siamltex}

\usepackage{amsmath,amsfonts,graphicx}
\pdfoutput=1

\usepackage{verbatim}

\def\tbe{ {\tilde\beta}}
\def\ls{ l_{123}}
\def\ms{ m_{123}}
\def\ns{ n_{123}}
\def\l2{ l_{23}}
\def\m2{ m_{23}}
\def\n2{ n_{23}}

\def\fp{ f_+}
\def\fm{ f_-}
\def\De{ {D_E}}
\def\Dt{ {D_T}}
\def\Da{ {D_A}}

\def\tg{\mbox{\boldmath{$\tau$}}}
\def\norm{ {\hat{\bf n}}}
\def\tang{ {\hat\tg}}
\def\dt{ \frac{d}{dt}}
\def\be{ \mbox{\ss}}

\title{Anisotropic step stiffness from a kinetic model of epitaxial growth}

\author{Dionisios Margetis\thanks{Department of Mathematics and Institute for Physical Science and Technology, 
         University of Maryland, College Park, MD 20742-4015 ({\tt dio@math.umd.edu}).}\and
        Russel E. Caflisch\thanks{Department of Mathematics and Department
        of Materials Science \& Engineering, University of California, 405 Hilgard Avenue,
        Los Angeles, CA 90095-1555 ({\tt caflisch@math.ucla.edu}). This work was supported in part by the
         National Science Foundation through grant DMS-0402276.}}        
\begin{document}

\maketitle

\begin{abstract}
Starting from a detailed  model for the kinetics of a step edge or island boundary, 
we derive a Gibbs-Thomson type formula and the associated step stiffness as a function of the step edge orientation angle, 
$\theta$. Basic ingredients of the model are:
(i) the diffusion of point defects (``adatoms'') on terraces and along step edges; 
(ii) the convection of kinks along step edges; and 
(iii) constitutive laws that relate adatom fluxes, sources for kinks, and the kink velocity 
with densities via a mean-field approach. 
This model has a  kinetic (nonequilibrium) 
steady-state solution that corresponds to epitaxial growth through step flow. 
The step stiffness, $\tbe(\theta)$, is determined via perturbations of the kinetic steady state 
for small edge P\'eclet number, $P$, which is the ratio of the deposition to the diffusive flux
along a step edge. In particular, $\tbe$ is found to 
satisfy $\tbe =O(\theta^{-1})$ for 
$O(P^{1/3}) <\theta \ll 1$, which is in agreement with independent, equilibrium-based calculations.
\looseness=-1
\end{abstract}

\begin{keywords} 
epitaxial growth, island dynamics, step edge, adatoms, edge-atoms, surface diffusion, step stiffness, line tension,
step-edge kinetics, kinetic steady state, Gibbs-Thomson formula, Ehrlich-Schwoebel barrier, step permeability
\end{keywords}

\begin{AMS}
35Q99, 35R35, 74A50, 74A60, 82C24, 82C70, 82D25.
\end{AMS}

\pagestyle{myheadings}
\thispagestyle{plain}
\markboth{D. MARGETIS AND R. E. CAFLISCH}{ANISOTROPIC STIFFNESS FROM EPITAXIAL KINETIC MODEL}

\section{Introduction}
\label{sec:intro}

The design and fabrication of novel small devices require the synergy of experiment, mathematical modeling and 
numerical simulation. 
In epitaxial growth, crystal surface features such as thin films, which are building blocks of solid-state devices,
are grown on a substrate by material deposition from above. Despite continued  
progress, the modeling and simulation of epitaxial phenomena remains challenging because it involves
reconciling a wide range of length and time scales.

An elementary process on solid surfaces is the hopping of atoms in the presence of
line defects (``steps'') of atomic height~\cite{evansetal06,jeongwilliams99,williamsbartelt96}: atoms hop on terraces, and attach to and detach from 
step edges (or island boundaries).
Burton, Cabrera and Frank (BCF)~\cite{bcf51} first described each step edge as a boundary moving by mass conservation
of point defects (``adatoms'') which diffuse on terraces. In the BCF theory, the step motion occurs near
thermodynamic equilibrium. Subsequent theories have accounted for far-from-equilibrium processes; for a review 
see section~\ref{sec:model}. 

The macroscale behavior of crystal surfaces is described by use of effective material parameters such as
the step stiffness, $\tbe$~\cite{margetiskohn06}. In principle, $\tbe$
depends on the step edge orientation angle, $\theta$, and is viewed as a quantitative measure of step edge 
fluctuations~\cite{akutsu86,stasevichetal04}. 
Generally, effective step parameters such as $\tbe$ originate from atomistic processes to which inputs are hopping rates for atoms;
in practice, however, the parameters are often provided by phenomenology.
For example, the dependence of $\tbe$ on $\theta$ is usually speculated by invoking the underlying crystal 
symmetry~\cite{banchetal05,hausservoigt05,siegeletal04,spencer04}.\looseness=-1 

In this article we analyze a kinetic model for out-of-equilibrium processes~\cite{caflischetal99,caflischli03}
in order to: (i) derive a Gibbs-Thomson (GT) type formula, which 
relates the adatom flux normal to a curved step edge and the step edge 
curvature~\cite{israelikandel99,jeongwilliams99}; and (ii) determine
the step stiffness $\tbe$, which enters the GT relation, as a function
of $\theta$. For this purpose, we apply perturbations of the kinetic (nonequilibrium) steady state
of the model for small P\'eclet number $P$, which is the ratio of the material deposition 
flux to the diffusive flux along a step edge, i.e. 
\begin{equation}
P = (2 a^3 \bar f) /D_E~, \label{eq:Pdef}
\end{equation}
in which $a$ is an atomic length, $\bar f$ is a characteristic size for the flux $f$ normal 
to the boundary from each side, and $D_E$ is the coefficient for diffusion along the boundary. A factor of $2$ is included in (\ref{eq:Pdef}) since the flux is two-sided and the total flux is of size $2\bar f$.
 For sufficiently small $\theta$ and $P$, we find that the
stiffness has a behavior similar to that predicted by equilibrium-based 
calculations~\cite{stasevich06}.\looseness=-1

For the boundary of a two-dimensional material region, a definition
of $\tbe$ can arise from linear kinetics. In the 
setting of atom attachment-detachment at an edge, 
this theory states that the material flux, $f$, normal to the curved boundary is linear in the difference of the material 
density, $\rho$, at the boundary from a reference or ``equilibrium'' density, $\rho_0$. 
The GT formula connects $\rho_0$ to the boundary curvature, $\kappa$.
For unit layer thickness and negligible step interactions~\cite{jeongwilliams99}, the normal flux reads\looseness=-1
\begin{equation}
f =   \Da (\rho-\rho_0),\label{eq:f-noperm}
\end{equation}
where $\Da$ is the diffusion coefficient for attachment and detachment, and $\rho_0$ is defined by
\begin{equation} 
\rho_0=\rho_*\,e^{\frac{\tbe\,\kappa}{k_BT}}\sim \rho_*\biggl(1+\frac{\tbe}{k_BT}\,
\kappa\biggr),\qquad |\tbe \kappa|\ll k_BT~.
\label{eq:GT}
\end{equation}
The last equation is referred to as the GT 
formula, in accord with standard thermodynamics~\cite{biskupetal04,gibbs28,krishnamacharietal96,landau,rowlinsonwidom82}. 
In~(\ref{eq:GT}), $\rho_*$ is the equilibrium density near a straight step edge 
and $k_B T$ is Boltzmann's energy ($T$ is temperature);
the condition $|\tbe\kappa|\ll k_B T$ is satisfied in most experimental 
situations~\cite{tersoffetal97}. Equation~(\ref{eq:f-noperm}) does not account for
step permeability, by which terrace adatoms hop directly
to adjacent terraces~\cite{ozdemir90,tanakaetal97}. This process is discussed in section~\ref{sec:model}.\looseness=-1

For systems that are nearly in equilibrium, the exponent in~(\ref{eq:GT}) is derived by a thermodynamic 
driving force starting 
from the step line tension $\beta$, the free energy per unit length of the boundary~\cite{gurtin93}. 
The step stiffness $\tbe$ is related to $\beta$ by~\cite{akutsu86,fisher84,fisher82}\looseness=-1
\begin{equation}
\tbe = \beta + \beta_{\theta \theta}\qquad (\beta_\theta:=\partial_\theta\beta)~.
\label{eq:tension-stiff}
\end{equation}
Evidently, the knowledge of $\tbe$ alone does not yield $\beta$ uniquely: by~(\ref{eq:tension-stiff}),
\begin{equation}
\beta(\theta)=C_1\,\cos\theta+C_2\,\sin\theta+\int_0^{\theta}d\vartheta\ \tbe(\vartheta)\,\sin(\theta-\vartheta)
\label{eq:beta-tdbeta}
\end{equation}where $C_1$ and $C_2$ are in principle arbitrary constants.

The parameters $\beta$ and $\tbe$ are 
important in the modeling and numerical simulation of epitaxial phenomena. 
In thermodynamic equilibrium, the angular dependence of the step line tension, $\beta(\theta)$, 
determines the equilibrium (two-dimensional) shape of step edges or islands, e.g. the macroscopic flat parts 
(``facets'') of the step are found by minimizing the step line energy through 
the Wulff construction~\cite{herring51,pengetal99,rottmanwortis84,szalmaetal05,taylor74,wulff1901}.
Near thermodynamic equilibrium,
the step stiffness, $\tbe(\theta)$, controls the temporal decay of fluctuations from 
equilibrium~\cite{akutsu86,jeongwilliams99}. The significance of $\tbe$ was pointed out
by de Gennes in the context of polymer physics almost forty years ago~\cite{deGennes68,einstein03}:
the energy of a polymer (or step edge) can be described by a kinetic energy term proportional
to $\tbe\cdot (dx/dy)^2$, i.e., the stiffness times a ``velocity" squared where $x$ and $y$
are suitable space coordinates and $y$ loosely corresponds to ``time." Starting with a two-dimensional 
Ising model, Stasevich et al.~\cite{stasevich_thesis,stasevich06,stasevichetal04,stasevichetal05}
carried out a direct derivation of $\beta(\theta)$ and $\tbe(\theta)$ from an  
equilibrium perspective based on atomistic key energies. For most systems, however, 
there has been no standard theoretical method for 
determining $\beta(\theta)$ and $\tbe(\theta)$.\looseness=-1

More generally, energetic principles such as a thermodynamic driving force are powerful as a means of describing 
the macroscopic effect of atomistic kinetics. The range of validity of energetic principles is not fully known and is an important unresolved issue.
We believe that energetic arguments should be valid for systems that are nearly in local equilibrium, where  
the relevant processes approximately satisfy detailed balance. For systems that are far 
from equilibrium, however, 
energetic principles may serve as a valuable qualitative guide, even if they are not quantitatively accurate.

The kinetic and atomistic origin of a material parameter that plays the role of the step stiffness are the subject of this article. 
For a step edge or an island boundary on an epitaxial crystal surface, we use the
detailed kinetic model formulated by Caflisch et al.~\cite{caflischetal99,caflischli03} and further developed by Balykov and Voigt \cite{balykovvoigt05,balykovvoigt06} for the dynamics 
of the boundary. The basic ingredients are: (i) diffusion equations for adatom and edge-atom densities
on terraces and along step edges; (ii) a convection equation for the kink density along step edges;
and (iii) constitutive, algebraic laws for adatom fluxes, sources for kinks and
the kink velocity by mean-field theory. 
This model admits a kinetic (nonequilibrium) steady state that allows for epitaxial growth via step flow.
The model has been partly validated by kinetic Monte Carlo simulations~\cite{caflischetal99}.

The detailed step model described in~\cite{caflischetal99,caflischli03} and 
section~\ref{subsec:noneq-kin} focuses on the
kinetics of adatoms, edge-atoms and kinks at a step edge. As discussed by
Kallunki and Krug~\cite{kallunkikrug03}, an edge-atom is {\it energetically} equivalent to two
kinks. For example, the equilibrium density of kinks is proportional to 
$\exp[-\varepsilon/(k_B T)]$ while the equilibrium density of edge-atoms is proportional to
$\exp[-2\varepsilon/(k_B T)]$, in which $\varepsilon$ is defined as the kink
energy in~\cite{kallunkikrug03}, or identified with $-(k_B T/2)\log({D_K/D_E})$ 
in~\cite{caflischetal99}; $D_K$ and $D_E$ are diffusion coefficients
for kinks and edge-atoms. On the
other hand, the {\it kinetics} in~\cite{caflischetal99,caflischli03}
are different for edge-atoms and kinks, since edge-atoms can hop 
at rate $D_E$, while kinks move through detachment of atoms at
rate $D_K$. This situation is consistent with the kinetics described in~\cite{kallunkikrug03}, in which
the $D_E$ and $D_K$ are proportional to $\exp[-E_{st}/(k_B T)]$ and
$\exp[-E_{det}/(k_B T)]$, respectively.

We are aware that the mean-field laws applied here, although plausible and analytically tractable,
pose a limitation: actual systems are characterized by atomic correlations, which can cause
deviations from this mean-field approximation. In particular, the validity of the mean-field assumption 
may be limited to orientation angles $\theta$ in some neighborhood of $\theta=0$. 
Note also that the most interesting results of this analysis are for $\theta$ near zero.
Determination of the range of validity for this model is an important endeavor but beyond the scope of this paper. An extension of this model, 
which could improve its range of validity, would be to explicitly track the kinks in a step edge. This additional discreteness in the model 
would make the analysis of step stiffness more difficult.  Our analysis is a systematic study 
of predictions from the mean-field approach only, and the conclusions presented here are all derived within the context of this approach. 
On the other hand, our analysis is more detailed than previous treatments of step stiffness, since it is based on kinetics rather than 
a thermodynamic driving force. Moreover, the model includes atomistic information, through a
density of adatoms, edge-atoms and kinks.  

For evolution near the kinetic steady state, we derive for the mass flux, $f$, a term analogous to the Gibbs-Thomson
formula~(\ref{eq:GT}), and subsequently find the corresponding angular dependence of the step stiffness, $\tbe(\theta)$. 
Our main assumptions are: 
(i) the motion of step edges or island boundaries is slower than the diffusion of adatoms and edge-atoms
and the convection of kinks, which amounts
to the ``quasi-steady approximation'';
(ii) the mean step edge radius of curvature, $\kappa^{-1}$, is large compared
to other length scales including  the step height, $a$;
and (iii) the edge P\'eclet number, $P$, given by~(\ref{eq:Pdef}) is sufficiently small, which signifies the usual regime for
molecular beam epitaxy (MBE).
To the best of our knowledge, the analysis in this paper offers the first kinetic derivation of a
Gibbs-Thomson type relation and the step stiffness for all admissible values of the step edge orientation angle, $\theta$.
(This approach is distinctly different from the one in e.g.~\cite{shenoy04}
where classical elasticity is invoked.) Our results for the stiffness are summarized in section~\ref{sec:res}; 
see~(\ref{eq:stiff-smalltheta})--(\ref{eq:k0-asymp1}). 

A principal result of our analysis is that $\tbe =O(\theta^{-1})$ for 
$O(P^{1/3}) < \theta \ll 1$, which by~(\ref{eq:beta-tdbeta}) yields $\beta=O(\theta\ln\theta)$ for the step line tension.
This result is in agreement with the independent analysis 
in~\cite{stasevich_thesis,stasevich06,stasevichetal04,stasevichetal05}, which makes use
of equilibrium concepts. A detailed comparison of the two approaches is not addressed in our analysis. 
Our findings are expected to 
have significance for epitaxial islands, for example in predicting their facets, their roughness (e.g., fractal or 
smooth island boundaries) and their stability, as well as
for the numerical simulation of epitaxial growth. More generally, our analysis can serve as a guide 
for kinetic derivations of the GT relation in other material systems. 
For example, it should be possible to derive the step stiffness for a step in local thermodynamic equilibrium
within the context of the same model. This topic is discussed briefly in 
section~\ref{sec:conclusion}. \looseness=-1
 
The present work extends an earlier analysis by Caflisch and Li~\cite{caflischli03},
which addressed the stability of step edge models and the derivation of the GT relation. 
The analysis in~\cite{caflischli03}, however, 
only determined the value of $\tbe$ along the high-symmetry orientation, $\theta=0$. This restriction was due to a  
scaling regime used in~\cite{caflischli03} on the basis of mathematical rather than physical principles.
In the present article we transcend the analytical limitations of~\cite{caflischli03} by applying
perturbation theory guided by the physics of the step-edge evolution near the kinetic steady state.
\looseness=-1

Our analysis also leads to formulas for {\it kinetic rates} in boundary conditions involving adatom fluxes.
In particular, the attachment-detachment rates are derived as functions
of the step edge orientation, and are shown to be different for up- and down-step edges.
This asymmetry amounts to an Ehrlich-Schwoebel (ES) effect~\cite{ehrlichhudda66,schwoebelshipsey66}, 
due to geometric effects rather than a difference in energy barriers.
In addition, if the terrace adatom densities are treated as input parameters,
the adatom fluxes involve effective {\it permeability} rates,
by which a fraction of adatoms directly hop to adjacent terraces (without
attaching to or detaching from step edges)~\cite{filimonovhervieu04,ozdemir90,tanakaetal97}. 
Our main results for the kinetic
rates are described by~(\ref{eq:ES-Dpm})--(\ref{eq:Apm-def}).

In this article we do not address the effects of elasticity, which are due for instance to bulk stress.
One reason is that elasticity requires a non-trivial modification of the kinetic
model that we use here. This task lies beyond our present scope. 
Another reason is that, in many physically interesting situations, the influence of elasticity may be described 
well via
long-range step-step interactions that do not affect the step stiffness. The study of elastic effects
is the subject of work in progress.\looseness=-1 

The remainder of this article is organized as follows. 
In section~\ref{sec:model} we review the relevant island 
dynamics model and the concept of step stiffness: In section~\ref{subsec:geom} we 
introduce the step geometry; in section~\ref{subsec:bcf} we outline elements of the BCF model, 
which highlight the GT formula; in section~\ref{subsec:noneq-kin} we describe the previous kinetic, 
nonequilibrium step-edge
model~\cite{caflischetal99,caflischli03}, which is slightly revised here; and
in section~\ref{subsec:stiff-calc} we outline our program for the stiffness, based on
the perturbed kinetic steady state for small step edge curvature, $\kappa$. In 
section~\ref{sec:res} we provide a summary of our main results.
In section~\ref{sec:steady} we derive analytic formulas pertaining to the kinetic steady state:
In section~\ref{subsec:flux} we use the mass fluxes as inputs and derive the ES
effect~\cite{ehrlichhudda66,schwoebelshipsey66}; and in section~\ref{subsec:density} we use the
mass densities as inputs to derive asymmetric, $\theta$-dependent step-edge permeability rates.
In section~\ref{sec:GT-stiff} we apply perturbation theory 
to find $\tbe(\theta)$ by using primarily the mass fluxes as inputs: 
In section~\ref{subsec:pertb} we carry out the perturbation analysis to first order for 
the edge-atom and kink densities as $\kappa\to 0$; in section~\ref{subsec:stiff} we derive the
step stiffness as a function of $\theta$; and in section~\ref{subsec:extension}
we discuss an alternative viewpoint on the stiffness.
In section~\ref{sec:conclusion} we discuss our results, and outline possible limitations.
The appendices provide derivations and proofs needed in the main text.

\section{Background}
\label{sec:model}

In this section we provide the necessary background for the derivation of the step stiffness.
First, we describe the step configuration.
Second, we revisit briefly the constituents of the BCF theory with focus on 
the GT formula and the step stiffness, $\tbe$. Our review provides the introduction of $\tbe$ from a kinetic rather
than a thermodynamic perspective. Third, we describe in detail the nonequilibrium kinetic 
model~\cite{caflischetal99,caflischli03} with emphasis on the mean-field constitutive laws 
for edge-atom and kink densities.
Fourth, we set a perturbation framework for the derivation of $\tbe(\theta)$. 

\subsection{Step geometry and conventions}
\label{subsec:geom}

Following~\cite{caflischetal99,caflischli03} we consider a simple cubic crystal (solid-on-solid model)
with lattice spacing $a$ and
crystallographic directions identified with the $x$, $y$ and $z$ axes of the Cartesian system. 
The analysis of this paper is for a step edge or island boundary to which there is flux $f$ of atoms from the adjoining terraces. 
The flux $f$ may vary along the edge, as well as in time, and it comes from both sides of the edge, but it is characterized by 
a typical size $\bar f$ which has units of $(length \cdot time)^{-1}$.  In~\cite{caflischetal99,caflischli03} the geometry was 
specialized to a  step train with interstep distance $2L$ and deposition flux $F$, so that in steady state the  flux to the step is $f=LF$. 
This global scenario is not necessary, however, since the analysis here is local and only  requires a nonzero quasi-steady flux $f$. This could
occur even with no deposition flux $F=0$; for example, in annealing. 

\begin{figure}
\begin{tabular}{cc}
\includegraphics[width=2.3in,height=3.4in]{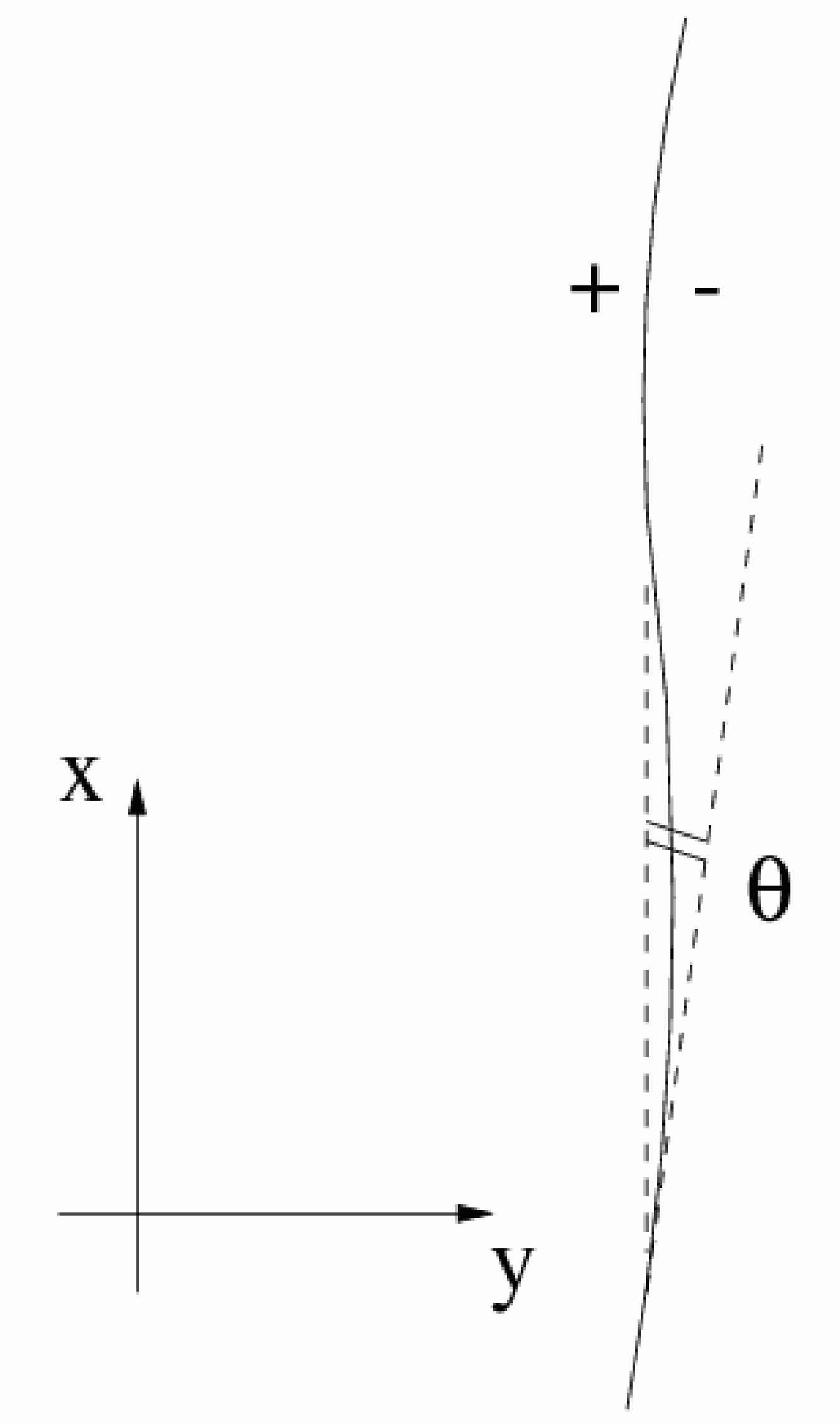} &
\includegraphics[width=4.0in,height=4.5in,angle=0]{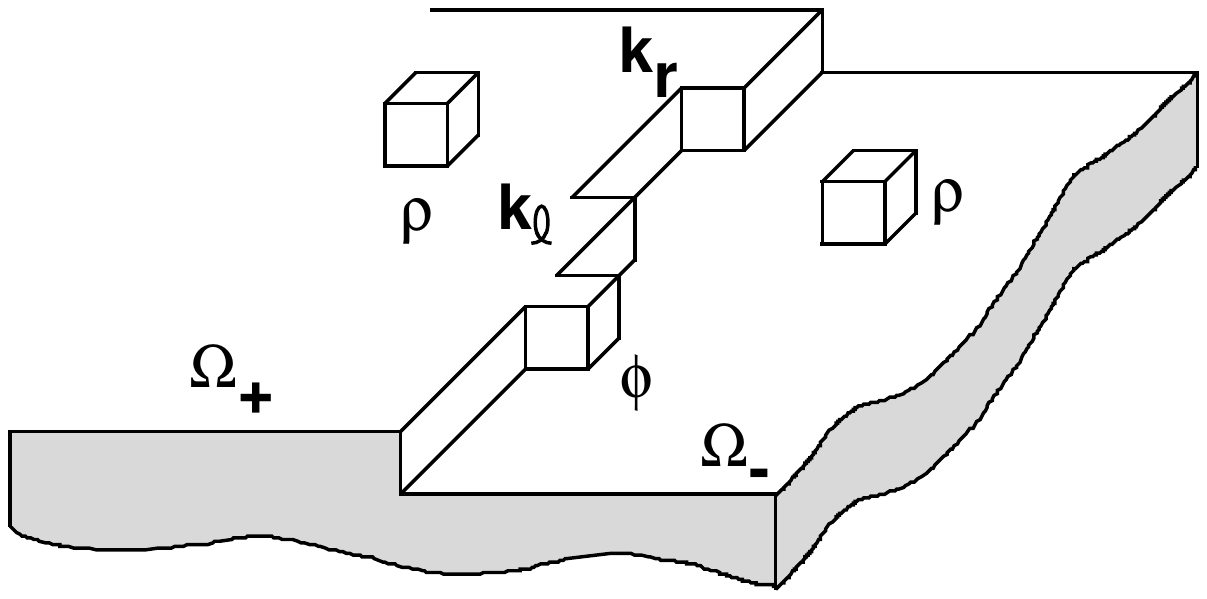}
\end{tabular}
\caption{The macroscopic (left) and microscopic (right) views of a step edge 
in the (high-symmetry) $xy$-plane of a crystal. 
In the macroscopic view, the step edge orientation relative to the $x$ axis is indicated by the angle $\theta$.
The $+$ ($-$) sign indicates an upper (lower) terrace. The surface height decreases to the right. 
The microscopic view shows adatoms ($\rho$), edge-atoms ($\phi$), left-facing kinks 
($k_\ell$) and right-facing kinks ($k_r$); $\Omega_{+}$ ($\Omega_{-}$) is the
region of the upper (lower) terrace.\looseness=-1}\label{fig:Fig1}
\end{figure}

For algebraic convenience we adopt and extend the notation conventions of~\cite{caflischli03}. 
Specifically, we use the following symbols: ($x, y, z$) for dimensional spatial 
coordinates, $t$ for time, $D$ for any diffusion coefficient, $\rho$ for number density
per area , and $\xi$ for number density per length; and define the corresponding nondimensional quantities 
$\tilde x, \tilde y, \tilde z, \tilde t, \tilde D, \tilde \rho, \tilde \xi$ by
\begin{eqnarray}
(\tilde x, \tilde y, \tilde z) &:=& (x/a, y/a, z/a)~,
\label{eq:nond-space}\\
\tilde t &:=& (a\bar f)\, t~,
\label{eq:nond-time}\\
\tilde D &:=& D/(a^3\bar f)~,
\label{eq:nond-diff}\\
\tilde \rho &:=& a^2\rho~,
\label{eq:nond-rho}\\
\tilde \xi &:=& a\xi~.
\label{eq:nond-phi}
\end{eqnarray}
Now drop the tildes, so that $x, y, z, t, D, \rho, \xi$ are dimensionless.
This choice amounts to measuring all distances in units of $a$ and all times in units
of $(a\bar f)^{-1}$. Equivalently, (\ref{eq:nond-space})--(\ref{eq:nond-phi}) correspond to setting 
$a=1$ and $\bar f=1$. For our analysis, the single most important dimensionless parameter is
the Peclet number $P$ from (\ref{eq:Pdef}), which is equal to $2 D_E^{-1}$ after nondimensionalization; i.e.
\begin{equation}
D_E=2 P^{-1}.
\end{equation}

Next, we describe the coordinates of the step geometry in more detail. 
We consider step boundaries that stem
from perturbing a straight step edge coinciding with a fixed axis (e.g., the $x$-axis). 
All steps are parallel to the high-symmetry (``basal''), 
$xy$-plane of the crystal. The projection of each edge on the basal plane is represented macroscopically by a smooth curve 
with a local tangent that forms the (signed) angle $\theta$ with the $x$-axis, 
where $-\theta_0< \theta <\theta_0$~\footnote{The definition of $\theta$
here is the same as that in~\cite{caflischetal99}, but different from the one in~\cite{caflischli03} 
where $\theta$ is the angle formed by the local tangent and the $y$ axis.}.
Without loss of generality we take $0\le\theta< \theta_0$ and assume that $\theta_0 < \pi/4$ in our analysis. 
We take the upper terrace to be to the left of an edge so that
all steps move to the right during the growth process. 
So, the projection of each step edge is represented by\looseness=-1
\begin{equation}
y=Y(x,t)~,
\label{eq:edge-repr}
\end{equation}
where $Y(x,t)$ is a sufficiently differentiable function of ($x, t$). 

It follows that the unit normal and tangential vectors to the step boundary are~\cite{caflischli03}
\begin{equation}
\norm = (\sin\theta,-\cos\theta)=(y_s,-x_s)~,\qquad \tang=(\cos\theta,\sin\theta)=(x_s,y_s)~,
\label{eq:unit-def}
\end{equation}
where $s$ is the arc length and lowercase subscripts denote 
partial differentiation (e.g., $x_s:=\partial_s x$) unless it is noted or implied otherwise. 
The step edge curvature is 
\begin{equation}
\kappa=-\theta_s~.
\label{eq:kappa}
\end{equation} 

There is one more geometric relation that deserves attention.
By denoting the densities of
left- and right-facing kinks $k_{l}$ and $k_{r}$, respectively, we have~\cite{caflischetal99} 
\begin{equation}
k_r-k_l=-\tan\theta~;
\label{eq:krl-theta}
\end{equation}
see section~\ref{subsec:noneq-kin} for further discussion. 
This geometric relation poses a constraint on the total kink density, $k$ ($k\ge 0$). By 
\begin{equation}
k:=k_{r}+k_{l}\label{eq:k-def}
\end{equation}and~(\ref{eq:krl-theta}), $k$ must satisfy
\begin{equation}
k\ge |\tan\theta|~.
\label{eq:k-ctr}
\end{equation}

The formulation of a nonequilibrium kinetic step edge model (section~\ref{subsec:noneq-kin}) 
requires the use of several coordinate systems for an island boundary; these are described in appendix~\ref{app:coods}.
In the following analysis it becomes advantageous to use $\theta$ as the main local coordinate. 
Its importance as a dynamic variable along a step edge is implied by the steady-state limit
$k\to |\tan\theta|$ as $\kappa\to 0$ and $P\to 0$; see~(\ref{eq:k0-asymp1-corr}).
Some useful identities that enable transformations to the $(\theta, t)$ variables are 
provided in appendix~\ref{app:coods}.\looseness=-1

\subsection{BCF model}
\label{subsec:bcf}

In the standard BCF theory~\cite{bcf51} the projection of step edges on the basal plane are smooth
curves that move by the attachment and detachment of atoms due to mass conservation.
The BCF model comprises the following near-equilibrium evolution laws.
(i) The adatom density solves the diffusion equation on terraces.
(ii) The adatom flux and density satisfy (kinetic) boundary conditions 
for atom attachment-detachment at step edges. 
(iii) The step velocity equals the sum of the adatom fluxes normal to the edge. 
In this setting, the GT formula links
the normal mass flux to the step edge curvature.\looseness=-1 

We next describe the equations of motion in the BCF model for 
comparisons with the kinetic model of section~\ref{subsec:noneq-kin}.
The density, $\rho$, of adatoms on each terrace solves
\begin{equation}
\partial_t \rho-\Dt\,\Delta\rho=F~,
\label{eq:rho-pde}
\end{equation}where $\Dt$ is the terrace diffusion coefficient and $\Delta$ denotes the Laplacian in $(x,y)$.
\looseness=-1 

As an extension of the BCF model, the boundary conditions for~(\ref{eq:rho-pde}) are now formulated by linear kinetics with
inclusion of both atom attachment-detachment {\it and} step 
permeability~\cite{jeongwilliams99,ozdemir90,tanakaetal97}:
\begin{equation}
f_{\pm}=\Da^{\pm}\, (\rho_{\pm}-\rho_0^\pm)\pm D_p^\pm\,(\rho_+-\rho_-)~;
\label{eq:f-bcf}
\end{equation}
cf.~(\ref{eq:f-noperm}). Here, $f_\pm$ is the adatom flux normal to an edge from the 
upper ($+$) or lower ($-$) terrace, i.e.,
\begin{equation}
\mp f_\pm:=v\rho_\pm+\Dt\norm\cdot(\nabla\rho)_\pm~,\label{eq:fpm-def}
\end{equation}
$\rho_{\pm}$ is the terrace adatom density restricted to the step edge,
$\Da^{\pm}$ is the attachment-detachment rate coefficient and $D_p^{\pm}$
is the permeability rate coefficient. These rates can account for different up- and down-step 
energy barriers, e.g. the ES effect in the case of $\Da^\pm$~\cite{ehrlichhudda66,schwoebelshipsey66}. 
The reference density $\rho_0^\pm$ is 
given by~(\ref{eq:GT}) where $\rho_*$
is replaced by $\rho_*^\pm$ for up- and down-step edge asymmetry. 
Evidently,~(\ref{eq:f-bcf}) forms an extension of formula~(\ref{eq:f-noperm}) but still
corresponds to near-equilibrium kinetics; it will be modified in section~\ref{subsec:noneq-kin}. 

Equations~(\ref{eq:rho-pde}) and~(\ref{eq:f-bcf}) provide the fluxes
$f_\pm$ as functions of the step edge position and curvature. The step velocity, $v$, is then determined by
mass conservation,
\begin{equation}
v=f_+ + f_-~.\label{eq:st-vel}
\end{equation}In this formulation, step-edge diffusion and kink motion are neglected. 
In the next section, the BCF model is enriched with kinetic boundary conditions that account for the
motion of edge-atoms and kinks.

\subsection{Atomistic, nonequilibrium kinetic model}
\label{subsec:noneq-kin}

In this section we revisit the kinetic model by Caflisch et al.~\cite{caflischetal99,caflischli03}, which 
is an extension of the BCF model (section~\ref{subsec:bcf}) to nonequilibrium processes. We
apply this kinetic model~\cite{caflischetal99} to step edges of arbitrary orientation;
and further revise it to account for a step edge diffusion coefficient defined 
along the (fixed) crystallographic $x$-axis.
This last feature, although not important for our present purpose of calculating the step stiffness, renders
the model consistent with recent studies of the edge-atom migration along a step edge~\cite{kallunkikrug03}.
The following processes are included. (i) Adatom diffusion on terraces, which is described
by~(\ref{eq:rho-pde}) of the BCF theory, and edge-atom diffusion along step edges. (ii) Convection of kinks on step edges with sinks and sources
to account for conversion of terrace adatoms and edge-atoms to kinks. 
(iii) Constitutive laws that relate mass fluxes, sources for kinks and the step velocity 
with densities via a mean-field theory, and modify the BCF laws~(\ref{eq:f-bcf}) and~(\ref{eq:st-vel}). 
In this model, kink densities are assumed sufficiently
small, enabling the neglect of higher-order terms within the mean-field approach. Recently,
extensions of this theory were developed~\cite{balykovvoigt05,balykovvoigt06,filimonovhervieu04},
including higher kink densities by Balykov and Voigt~\cite{balykovvoigt05,balykovvoigt06}.
Next, we state the requisite equations of motion in addition to~(\ref{eq:rho-pde}) for adatom terrace diffusion.\looseness=-1

\subsubsection{Equations of motion along step edges}
\label{sssec:eq-mot}

An assumption inherent to the present model is the different kinetics 
of kinks and edge-atoms. Each of these species is of course not
conserved separately, since edge-atoms can generate kinks, but 
can be described by a distinct density:
$\phi(x,t)$ for edge-atoms and $k(x,t)$ for kinks. In addition, 
their motion is different: the edge-atom flux follows from
gradients of the density $\phi$; while the kink flux stems from a velocity field, $w$. 

We proceed to describe the equations of motion.
The edge-atom number density, $\phi(x,t)$, solves
\begin{equation}
\partial_t\phi -\De\,\partial_x^2\phi=\frac{\fp+\fm}{\cos\theta}-f_0~,
\label{eq:phi-pde}
\end{equation}
where $\De$ is the step edge diffusivity defined along the high-symmetry ($x$-) axis
and $f_0$ represents the loss of edge-atoms to kinks; see~(\ref{eq:fpm}) and~(\ref{eq:f0}) below. 
For later algebraic convenience, it
is advantageous to transform~(\ref{eq:phi-pde}) to $(\theta, t)$ variables. By the formulas
~(\ref{eq:partx})  and~(\ref{eq:part-tx}) 
of appendix~\ref{app:coods}, (\ref{eq:phi-pde}) is thus recast to 
\begin{equation}
\partial_t|_\theta\phi+\kappa(v_\theta+v\tan\theta)\partial_\theta\phi-\De\frac{\kappa}{\cos\theta}
\partial_\theta\frac{\kappa}{\cos\theta}\partial_\theta\phi=\frac{\fp+\fm}{\cos\theta}-f_0~.
\label{eq:phi-pde-th}
\end{equation}

We turn our attention to kinks. The total kink density, $k(x,t)$, of~(\ref{eq:k-def}) solves
\begin{equation}
\partial_t k+\partial_x[w(k_r-k_l)]=2(g-h)~,
\label{eq:k-pde}
\end{equation}
where $w(k_r-k_l)=-w\tan\theta$ is the 
flux of kinks with respect to the $x$-axis, $g$ is the net gain in kink pairs
due to nucleation and breakup, and $h$ is the net loss in kink
pairs due to creation and annihilation~\cite{caflischetal99}. 
The terms $w$, $g$ and $h$ are described as functions of densities 
in~(\ref{eq:w-def})--(\ref{eq:h-def}) below.
In the $(\theta,t)$ coordinates, (\ref{eq:k-pde}) reads
\begin{equation}
\partial_t|_\theta k+\kappa(v_\theta+v\tan\theta)\partial_\theta k+\frac{\kappa}{\cos\theta}\partial_\theta(w\tan\theta)=2(g-h)~.
\label{eq:k-pde-th}
\end{equation}

Equations~(\ref{eq:phi-pde}) and~(\ref{eq:k-pde}) can be transformed to other coordinates,
including the $(s, t)$ variables where $s$ is the arc length. 
For completeness, in appendix~\ref{app:id-edge}
we provide relations that are needed in such transformations; and in appendix~\ref{app:eq-mot}
we describe the ensuing equations of motion in the $(s,t)$ coordinates.

Partial differential equations~(\ref{eq:phi-pde}) and (\ref{eq:k-pde}) are
coupled with the motion of step edges. In the following analysis, we apply the quasi-steady approximation,
neglecting the time derivative in~(\ref{eq:phi-pde-th}) and~(\ref{eq:k-pde-th}).
For definiteness, the boundary conditions in $x$ 
can be taken to be periodic. It remains to prescribe boundary conditions for atom attachment-detachment,
i.e., specify $f_\pm$ in~(\ref{eq:fpm-def}).
 In the present nonequilibrium context, $f_\pm$ are no longer given by~(\ref{eq:f-bcf}) of the BCF model, 
as discussed next.

\subsubsection{Constitutive laws}
\label{sssec:mf}

Following~\cite{caflischetal99,caflischli03} we describe mean-field constitutive laws for fluxes related to a tilted step 
edge (at $\theta\neq 0$). We also provide a geometric relation for the step edge velocity, $v$, which in a certain sense
replaces the BCF law~(\ref{eq:st-vel}). Because the explanations are given 
elsewhere~\cite{balykovvoigt05,caflischetal99},
we state the mean-field laws without a detailed discussion of their origin. 

By mean-field theory, the terrace adatom flux normal to the step edge is~\cite{caflischetal99}
\begin{eqnarray}
f_\pm&=&[\Dt\rho_\pm-\De\phi+l_{j_\pm}(\Dt\rho_\pm-D_K)k+m_{j_\pm}(\Dt\rho_\pm\phi-D_Kk_rk_l)\nonumber\\
\mbox{}&&+n_{j_\pm}(\Dt\rho_\pm k_rk_l-D_B)]\cos\theta,\quad j_+=2,\ j_-=3~,
\label{eq:fpm}
\end{eqnarray}where $l_j$, $m_j$ and $n_j$ are (effective) coordination numbers (positive integers) that count
the number of possible paths in the kinetic processes, weighted by the relative probability of a particle to be at the corresponding position. Also, $D_K$ is the diffusion
coefficient for an atom from a kink, and $D_B$ is the diffusion coefficient for an atom from
a straight edge. 
By neglect of $D_K$ and $D_B$, (\ref{eq:fpm}) readily becomes
\begin{equation}
f_\pm=(1+l_{j_\pm} k+m_{j_\pm}\phi+n_{j_\pm} k_rk_l)\Dt\rho_\pm\cos\theta-D_E\phi\cos\theta~.
\label{eq:fpm-simp}
\end{equation}
Omitting $D_K$ and $D_B$ is inconsistent with detailed balance, but has little effect on
the kinetic solutions described below.

Similarly, the mean-field kink velocity reads~\cite{caflischetal99}
\begin{equation}
w= l_1\De\phi+\Dt(l_2\rho_++l_3\rho_-)-\ls D_K\sim l_1\De\phi+\Dt(l_2\rho_++l_3\rho_-)~.
\label{eq:w-def}
\end{equation}
The gain in kink pairs from nucleation and breakup involving an edge-atom is~\cite{caflischetal99}
\begin{eqnarray}
g&=&\phi(m_1\De\phi+m_2\Dt\rho_++m_3\Dt\rho_-)-\ms\,D_K k_rk_l\nonumber\\
 &\sim& \phi(m_1\De\phi+m_2\Dt\rho_++m_3\Dt\rho_-)~.
\label{eq:g-def}
\end{eqnarray}
The respective loss of kink pairs by atom attachment-detachment is~\cite{caflischetal99} 
\begin{eqnarray}
h&=&(n_1\De\phi+n_2\Dt\rho_++n_3\Dt\rho_-)k_rk_l-\ns\,D_B\nonumber\\
&\sim& (n_1\De\phi+n_2\Dt\rho_++n_3\Dt\rho_-)k_rk_l~.
\label{eq:h-def}
\end{eqnarray}In the above,
\begin{equation}
p_{ij}:=p_i+p_j,\quad p_{ijk}:=p_i+p_j+p_k;\qquad p=m,\,n,\,l~.
\label{eq:sums-def}
\end{equation}

The constitutive laws are complemented by 
\begin{equation}
f_0=wk+2g+h~,
\label{eq:f0}
\end{equation}
which enters~(\ref{eq:phi-pde-th}). The step edge velocity, $v$, stems from a geometric relation;
see appendix~\ref{app:vel} for details. Specifically,
\begin{equation}
v=\frac{f_0}{1+\phi\kappa\cos\theta}\cos\theta=\frac{wk+2g+h}{1+\phi\kappa\cos\theta}\cos\theta~.
\label{eq:v-def}
\end{equation}

\subsection{Program for step stiffness}
\label{subsec:stiff-calc}

In this section we delineate a program for the calculation of the step stiffness from
the model of section~\ref{subsec:noneq-kin}. The key idea is to reduce
the nonequilibrium law~(\ref{eq:fpm}) to the linear kinetic law~(\ref{eq:f-bcf}) 
by treating the normal fluxes, $f_\pm$, 
as external, free to vary, $O(1)$ parameters of the equations of motion
along a step edge. In this context, the diffusion
equation~(\ref{eq:rho-pde}) is not invoked. Our method relies on
the perturbation of a solution for the densities $\phi$ and $k$. The solution
studied here is that of the kinetic steady state, under the assumption that it can be reached. 
Accordingly, we neglect the time derivative in the zeroth-order equations of motion; furthermore,
we neglect this derivative to the next higher order by imposing the quasi-steady approximation.
Another case, left for
future work, is that of thermodynamic equilibrium; see section~\ref{sec:conclusion}. In summary, 
we apply the following procedure:\looseness=-1

(i) To extract the kinetic steady state, we set $\partial_t|_\theta \equiv 0$ 
and $\kappa=0$ (i.e., we consider straight edges). This leads to
a system of algebraic equations for $(\phi, k)\equiv (\phi^{(0)}, k^{(0)})$~\footnote{In this context,
the superscript in parentheses denotes the perturbation order in $\kappa$.}. The
coefficients of this system depend on $\theta$ and $f_\pm$. 
In principle, $(\phi^{(0)}, k^{(0)})$ 
cannot be found in simple closed form at this stage.

(ii) We assume that $P\ll 1$, and determine relatively simple
expansions for $(\phi^{(0)}, k^{(0)})$ in powers of $P$ for
$0\le\theta< O(P^{1/3})$ and $O(P^{1/3})<\theta<\pi/4$. 

(iii) We replace $(\phi, k)$ by $(\phi^{(0)}, k^{(0)})$ in the constitutive law~(\ref{eq:fpm})
and compare the result to~(\ref{eq:f-bcf}). Here, our analysis follows up two mathematically
equivalent but physically distinct routes. (a) By taking $f_\pm$
as input parameters, we derive formulas for the adatom reference densities, $\rho_*^{\pm}$,
and attachment-detachment rates, $\Da^\pm$, that depend on $f_\pm$; cf.~(\ref{eq:f-bcf}). Step permeability
is not manifested in this setting ($D_p\equiv 0$). (b) By considering $\rho_\pm$ as inputs, we predict
attachment-detachment rates and non-vanishing step permeability rates.

(iv) We consider perturbations of the kinetic steady state by taking $0<|\kappa|\ll 1$,
i.e. slightly curved step edges. Accordingly, we let
\begin{equation}
\phi\sim \phi^{(0)} +\phi^{(1)}\,\kappa~,\quad k\sim k^{(0)}+k^{(1)}\,\kappa~,
\label{eq:phk-app}
\end{equation}
where $\kappa\phi^{(1)}$ and $\kappa k^{(1)}$ are deviations from the kinetic steady state
and depend on $(\phi^{(0)}, k^{(0)})$. Expansion~(\ref{eq:phk-app})
is imposed on physical rather than mathematical grounds. Indeed, if the mean-field flux~(\ref{eq:fpm})
is expected to reduce to the linear kinetic law~(\ref{eq:f-bcf}),
then $\phi$ must be linear in $\kappa$. The equations of motion along an edge and the constitutive
laws are linearized in $\kappa\phi^{(1)}$ and $\kappa k^{(1)}$.

(v) By treating $f_\pm$ as input external parameters, we replace $\phi$ and $k$ in the right-hand
side of the constitutive
law~(\ref{eq:fpm-simp}) by expansions~(\ref{eq:phk-app}). 
Subsequently, we determine the stiffness $\tbe(\theta;\fp,\fm)$ by comparison 
to~(\ref{eq:f-bcf}) in view of~(\ref{eq:GT}).

The choice of fluxes $f_\pm$ or densities $\rho_\pm$ as input parameters is a physics modeling question. Although the mathematical results are equivalent for the two choice,  the physical
interpretation of these results is different, as stated above.

\section{Main results}
\label{sec:res}

Here, we give the main formulas stemming from our analysis of the
kinetic model described in section~\ref{subsec:noneq-kin}. 
A necessary condition for our perturbation analysis is $0\le \kappa<O(P)\ll 1$,
to be shown via a plausibility argument in section~\ref{subsec:pertb}.
Derivations and other related details are provided in sections~\ref{sec:steady} and~\ref{sec:GT-stiff}.

\subsection{\em ES effect (section~$\ref{subsec:flux}$)} \label{ESeffect}
When the fluxes $f_\pm$ are input parameters, the attachment-detachment of adatoms
from a terrace to an edge is {\it asymmetric}. So, the related diffusion coefficients $\Da^\pm$,
or attachment and detachment kinetic rates, 
which enter~(\ref{eq:f-bcf}), are found to be
different for an upper and lower terrace:
\begin{eqnarray}
\Da^+&=&\Dt\bigl[1+l_2k^{(0)}+m_2\phi^{(0)}+\textstyle{\frac{1}{4}}n_2({k^{(0)}}^2-\tan^2\theta)\bigr]\cos\theta~,\nonumber\\
\Da^-&=&\Dt\bigl[1+l_3k^{(0)}+m_3\phi^{(0)}+\textstyle{\frac{1}{4}}n_3({k^{(0)}}^2-\tan^2\theta)\bigr]\cos\theta~,
\label{eq:ES-Dpm}
\end{eqnarray}where $0\le\theta<\pi/4$ and $(l_2,m_2,n_2)\neq (l_3, m_3, n_3)$.
For $0<P\ll 1$, we show that~(\ref{eq:ES-Dpm}) reduce to
\begin{equation}
\Da^\pm \sim \Dt (1+l_{j_\pm}\tan\theta)\cos\theta~,
\label{eq:Dpm-unif}
\end{equation}where $j_+=2$ and $j_-=3$. In this description, there is no step permeability.
Note that the results presented in this section and their derivations do not depend on the
step edge curvature.

\subsection{\em Step permeability (section~$\ref{subsec:density}$)} \label{stepPermeability}
By using the adatom densities
$\rho_\pm$ as input external parameters, we show that step permeability coexists
with the ES effect; cf.~(\ref{eq:f-bcf}). 
For $O(P^{1/3})<\theta<\pi/4$ the diffusion coefficients for permeability are
\begin{equation}
D_p^\pm = \Dt\,\frac{A_\mp (1+l_{j_\pm}\tan\theta)}{1+(A_+ +A_-)\cos\theta}\cos^2\theta~.
\label{eq:perm-Dpm}
\end{equation}
The accompanying (asymmetric) attachment-detachment diffusion coefficients are
\begin{equation}
\Da^\pm=\Dt \frac{1+l_{j_\pm}\tan\theta\pm  A_\mp (l_2-l_3)\sin\theta}{1+(A_+ +A_-)\cos\theta}\cos\theta~,
\label{eq:Dpm-ES-dens}
\end{equation}
where
\begin{equation}
A_+=\frac{1}{\sin\theta}\,\frac{1+l_3\tan\theta}{Q({\bf l})}~,\qquad 
A_-=\frac{1}{\sin\theta}\,\frac{1+l_2\tan\theta}{Q({\bf l})}~,
\label{eq:Apm-def}
\end{equation}
\begin{equation}
Q({\bf p})=p_1(1+l_2\tan\theta)(1+l_3\tan\theta)+p_2(1+l_3\tan\theta)+p_3(1+l_2\tan\theta)~,\label{eq:Q-def}
\end{equation}
with ${\bf p}:=(p_1,\,p_2,\,p_3)$ and $p=l,\,m,\,n$; in~(\ref{eq:Apm-def}), ${\bf l}=(l_1,\,l_2,\,l_3)$.
Note that $D_p^\pm$ and $\Da^\pm$ here are independent of $f_\pm$, 
as in~(\ref{eq:Dpm-unif}).
The corresponding results for $0\le\theta< O(P^{1/3})$ are presented
in section~\ref{subsec:density}. 
Again, the results presented in this section and their derivations do not depend on the step edge curvature.

\subsection{\em Step stiffness (section~$\ref{subsec:stiff}$)} \label{stepStiffness}
Let the adatom fluxes $f_\pm$ from an upper ($+$) and lower ($-$) 
terrace towards an edge be the input, independent parameters. For sufficiently
small angle $\theta$, the stiffness is found to be
\begin{equation}
\frac{\tbe}{k_B T}\sim 2\frac{l_{123}}{n_{123}}\,\frac{(\fp+\fm)_\theta}{\fp+\fm}\,\frac{1}{\theta}=
O\biggl(\frac{1}{\theta}\biggr)\qquad O(P^{1/3})<\theta\ll 1~,
\label{eq:stiff-smalltheta}
\end{equation}
\begin{equation}
\frac{\tbe}{k_B T}\sim P^{-2/3}\,\frac{4l_{123}}{n_{123} (\check C^k_0)^2+8m_{123}}=O(P^{-2/3})
\qquad 0\le\theta <O(P^{1/3})\ll 1~,
\label{eq:stiff-rg2}
\end{equation}
where $p_{123}$ ($p=l,\,m,\,n$) is defined in~(\ref{eq:sums-def}) 
and~\footnote{The superscripts 
in $\check C^k$, $C^\phi$, $C^k$ and elsewhere below indicate
the physical origin of these coefficients, and should not be confused with numerical exponents
or perturbation orders.}
\begin{equation}
\check C^k_0 =\biggl[\frac{2m_{123}}{n_{123}\,l_{123}}(\fp +\fm)\biggl]^{1/3}~.\label{eq:check-Ck}
\end{equation}
Matching the asymptotic  results~(\ref{eq:stiff-smalltheta}) 
and~(\ref{eq:stiff-rg2}) is discussed near the end of  section \ref{subsec:pertb}.

For $\theta=O(1)$ the formula for $\tbe$ becomes more complicated; we give it here
for completeness. Generally,
\begin{equation}
\frac{\tbe}{k_BT}=\frac{\phi^{(1)}}{\phi^{(0)}}~,\label{eq:stiff-intro}
\end{equation}
where $\phi^{(0)}$ and $\phi^{(1)}$ are expansion coefficients
for $\phi$ and depend on $f_\pm$ and their derivatives in $\theta$; 
cf.~(\ref{eq:phk-app}). These coefficients are obtained explicitly for $P\ll 1$.
In particular, for $O(P^{1/3})<\theta<\pi/4$,
\begin{equation}
\phi^{(0)}\sim C^\phi_0\,P~,\qquad
\phi^{(1)}\sim C^\phi_1\,P~,
\label{eq:ph01-asymp1}
\end{equation}
\begin{equation}
C^\phi_0=\frac{1}{2\sin\theta}\,\frac{(1+l_3\tan\theta)\fp+(1+l_2\tan\theta)\fm}
{Q({\bf l})}~,\label{eq:Cphi-def}
\end{equation}
\begin{equation}
C^\phi_1=\frac{(v^{(0)}_\theta+v^{(0)}\tan\theta)k^{(0)}_\theta\cos\theta+(w^{(0)}\tan\theta)_\theta}{2\sin\theta}\,
\frac{W^k\tan\theta+w^{(0)}+H^k}{H^k\,W^\phi}~,\label{eq:Cph1-def}
\end{equation}
\begin{equation}
v^{(0)}=f_++f_-~,\label{eq:v0-def}
\end{equation}
\begin{equation}
w^{(0)}\sim 2l_1 C^\phi_0+l_2\frac{s_xf_+ +2C^\phi_0}{1+l_2\tan\theta}+l_3\frac{s_xf_- +2C^\phi_0}{1+l_3\tan\theta}~,\qquad
s_x=(\cos\theta)^{-1}~,
\label{eq:w0-def}
\end{equation}
\begin{equation}
W^\phi= 2l_1+\frac{2l_2}{1+l_2\tan\theta}+\frac{2l_3}{1+l_3\tan\theta}~,
\label{eq:Wphi}
\end{equation}
\begin{equation}
W^k=-l_2\frac{l_2+\textstyle{\frac{n_2}{2}}\tan\theta}{(1+l_2\tan\theta)^2}(s_xf_+ +2C^\phi_0)
-l_3\frac{l_3+\textstyle{\frac{n_3}{2}}\tan\theta}{(1+l_3\tan\theta)^2}(s_x f_- +2C^\phi_0)~,
\label{eq:Wk}
\end{equation}
\begin{equation}
H^k=\frac{\tan\theta}{2}\biggl(2n_1C^\phi_0+n_2\,\frac{s_xf_+ +2C^\phi_0}{1+l_2\tan\theta}
+n_3\,\frac{s_xf_-+2C^\phi_0}{1+l_3\tan\theta}\biggr)~,
\label{eq:Hk}
\end{equation}
\begin{equation}
k\sim k^{(0)}\sim \tan\theta~,
\label{eq:k0-asymp1}
\end{equation}
where $n_j$ and $l_j$
are coordination numbers. Recall definition~(\ref{eq:Q-def}) for $Q({\bf l})$. 
It is worthwhile noting that there is no asymmetry in the step stiffness $\tilde \beta$, in contrast to the attachment-detachment coefficients. 
The reason for this difference is that $\tilde \beta$ depends only on the edge-atom density, as shown in (\ref{eq:stiff-intro}).\looseness=-1

For the alternative approach in which the adatom densities $\rho_\pm$ are specified rather than the fluxes $f_\pm$, 
the analysis of the step stiffness is presented in section \ref{subsec:extension}.
The corresponding result (\ref{eq:fpm-ss}) is not of the form (\ref{eq:f-noperm}) and (\ref{eq:GT}), however,  
since the coefficient $\be$ in (\ref{eq:ss-def}) is not proportional to $\rho_*$.

\section{The kinetic steady state }
\label{sec:steady}

We analyze the kinetic steady state for a straight step, including its dependence on the 
P\'eclet number $P$ in section \ref{sec:steadyPdependence} and  the ES effect
 and step permeabiity in sections \ref{subsec:flux} and \ref{subsec:density}. 

\subsection{Kinetic steady state and its dependence on $P$}
\label{sec:steadyPdependence}

In this section, we simplify the equations of motion for edge-atom and kink densities
by imposing the kinetic steady state ($\partial_t\equiv 0$) for straight steps ($\kappa\equiv 0$). 
We find closed-form solutions for
small P\'eclet number, $P\ll 1$, in two distinct ranges of $\theta$. For $\theta_c=O(P^{1/3})<\theta<\pi/4$, we show
that $\phi=\phi^{(0)}$ is given by~(\ref{eq:ph01-asymp1}), and $k=k^{(0)}$ is given by~(\ref{eq:k0-asymp1}),
or more precisely by
\begin{equation}
k^{(0)}\sim \tan\theta +C_0^k\,P,~\label{eq:k0-asymp1-corr}
\end{equation}
where
\begin{equation}
C_0^k=\frac{2C_0^\phi}{\tan\theta}\frac{2C_0^\phi Q({\bf m})\cos\theta+m_2(1+l_3\tan\theta)\fp +m_3(1+l_2\tan\theta)\fm}
{2C_0^\phi Q({\bf n})\cos\theta+n_2(1+l_3\tan\theta)\fp +n_3(1+l_2\tan\theta)\fm}~;
\label{eq:Ck-def}
\end{equation}
$Q({\bf p})$ and $C_0^\phi$ are defined by~(\ref{eq:Q-def}) and~(\ref{eq:Cphi-def}). 
Furthermore,
\begin{equation}
\phi^{(0)}\sim \check C^\phi_0\,P^{2/3}~,\quad k^{(0)}\sim \check C^k_0\,P^{1/3}\qquad 0\le\theta< \theta_c=O(P^{1/3})~,
\label{eq:phik-exp-2}
\end{equation}
where $\check C^k_0$ is defined by~(\ref{eq:check-Ck}),
\begin{equation}
\check C^\phi_0=\biggl(\frac{n_{123}}{4m_{123}}\biggr)^{1/3}\ \biggl(\frac{f_+ +f_-}{2l_{123}}\biggr)^{2/3}~,
\label{eq:check-Cphi}
\end{equation}
and $p_{123}$ ($p=l,\,m,\,n$) is given in~(\ref{eq:sums-def});
cf. equations~(4.27) and~(4.28) in~\cite{caflischli03}. In effect, we determine mesoscopic kinetic rates, including
the attachment-detachment and permeability coefficients in~(\ref{eq:ES-Dpm})--(\ref{eq:Apm-def}).

We proceed to describing the derivations. By $\partial_t|_\theta = 0$ and $\kappa=0$ in~(\ref{eq:phi-pde}), (\ref{eq:k-pde}) and~(\ref{eq:v-def}),
we have $f_+ + f_- = f_0\cos\theta$, $g=h$ and $v=f_0\cos\theta$.
Eliminate $\Dt\rho$ in terms of $\De\phi=2P^{-1}\phi$ using (\ref{eq:fpm-simp}).
Thus, we readily obtain~(\ref{eq:v0-def}) for $v^{(0)}:=v$, along with the following system of coupled algebraic equations:
\begin{eqnarray}
&&\bigl[m_1\phi^{(0)}-\textstyle{\frac{n_1}{4}}({k^{(0)}}^2-\tan^2\theta)\bigr]\,
\bigl[1+l_2k^{(0)}+m_2\phi^{(0)}+\textstyle{\frac{n_2}{4}}({k^{(0)}}^2-\tan^2\theta)\bigr]\nonumber\\
\mbox{}&&\times\bigl[1+l_3k^{(0)}+m_3\phi^{(0)}+\textstyle{\frac{n_3}{4}}({k^{(0)}}^2-\tan^2\theta)\bigr]2P^{-1}\phi^{(0)}
+\bigl[m_2\phi^{(0)}-\textstyle{\frac{n_2}{4}}({k^{(0)}}^2-\tan^2\theta)\bigr]\nonumber\\
\mbox{}&&\times\bigl[1+l_3 k^{(0)}+m_3\phi^{(0)}+\textstyle{\frac{n_3}{4}}({k^{(0)}}^2-\tan^2\theta)\bigr]\,
(s_x f_++2P^{-1}\phi^{(0)})\nonumber\\
&&+\bigl[m_3\phi^{(0)}-\textstyle{\frac{n_3}{4}}({k^{(0)}}^2-\tan^2\theta)\bigr]
\bigl[1+l_2k^{(0)}+m_2\phi^{(0)}+\textstyle{\frac{n_2}{4}}({k^{(0)}}^2-\tan^2\theta)\bigr]\nonumber\\
&&\hskip30pt \times(s_xf_-+2P^{-1}\phi^{(0)})=0~,
\label{eq:steady-phik-1}
\end{eqnarray}
\begin{eqnarray}
&&(l_1k^{(0)}+3m_1\phi^{(0)})\,[1+l_2 k^{(0)}+m_2\phi^{(0)}+\textstyle{\frac{n_2}{4}}({k^{(0)}}^2-\tan^2\theta)\bigr]\nonumber\\
&&\times \bigl[1+l_3 k^{(0)}+m_3\phi^{(0)}+\textstyle{\frac{n_3}{4}}({k^{(0)}}^2-\tan^2\theta)\bigr]2P^{-1}\phi^{(0)}+
(l_2k^{(0)}+3m_2\phi^{(0)})\nonumber\\
&&\times [1+l_3k^{(0)}+m_3\phi^{(0)}+\textstyle{\frac{n_3}{4}}({k^{(0)}}^2-\tan^2\theta)\bigr]
(s_xf_++2P^{-1}\phi^{(0)})+(l_3k^{(0)}+3m_3\phi^{(0)})\nonumber\\
&&\times[1+l_2k^{(0)}+m_2\phi^{(0)}+\textstyle{\frac{n_2}{4}}({k^{(0)}}^2-\tan^2\theta)\bigr](s_xf_-+2P^{-1}\phi_0)\nonumber\\
\mbox{}&&= (f_+ +f_-)\,s_x\bigl[1+l_2k_0+m_2\phi_0+\textstyle{\frac{n_2}{4}}({k^{(0)}}^2-\tan^2\theta)\bigr]\nonumber\\
\mbox{}&&\hskip25pt \times[1+l_3 k^{(0)}+m_3\phi^{(0)}+\textstyle{\frac{n_3}{4}}({k^{(0)}}^2-\tan^2\theta)\bigr]~.
\label{eq:steady-phik-2}
\end{eqnarray}
Once these equations are solved, the flux variables
$w=:w^{(0)}$, $g=:g^{(0)}$ and $h=:h^{(0)}$ are determined in terms of $f_\pm$ by the constitutive 
laws~(\ref{eq:w-def})--(\ref{eq:h-def}). The substitution of $\phi$ and $k$
into~(\ref{eq:fpm-simp}) provides a relation between
$f_\pm$ and $\rho_\pm$.

Next, we simplify and explicitly solve~(\ref{eq:steady-phik-1}) and~(\ref{eq:steady-phik-2})
by enforcing $P\ll 1$. The ensuing scaling of $\phi^{(0)}$ and $k^{(0)}$ with $P$ depends on the range of $\theta$.
We distinguish the cases $\theta_c(P)<\theta<\pi/4$ and $0\le \theta<\theta_c(P)$, where
$\theta_c$ is estimated below; we expect that $\theta_c\to 0$ as $P\to 0$. 
\vskip5pt

(i)\ $\theta=O(1)$. By seeking solutions that are regular at $P=0$, we observe that if $P=0$ then
$(\phi^{(0)}, k^{(0)})=(0, \tan\theta)$ solves~(\ref{eq:steady-phik-1}) and~(\ref{eq:steady-phik-2}).
Thus, the expansions
\begin{equation}
\phi^{(0)}\sim C^\phi_0\,P~,\qquad k^{(0)}\sim \tan\theta+C^k_0\,P~,\qquad C_0^{\phi,k}=O(1)~,
\label{eq:phik-exp}
\end{equation}
form a reasonable starting point. These expansions yield the simplified system
\begin{eqnarray}
\lefteqn{2C^\phi_0(2m_1C^\phi_0-n_1C^k_0\tan\theta)
(1+l_2\tan\theta)(1+l_3\tan\theta)}\nonumber\\
&&+(2m_2C^\phi_0-n_2 C^k_0\tan\theta)(1+l_3\tan\theta)(s_xf_++2C^\phi_0)\nonumber\\
&&+(2m_3C^\phi_0-n_3C^k_0\tan\theta)(1+l_2\tan\theta)(s_x f_-+2C^\phi_0)=0~,
\label{eq:Cphik-1}
\end{eqnarray}
\begin{eqnarray}
\lefteqn{2C^\phi_0\, l_1\tan\theta(1+l_2\tan\theta)(1+l_3\tan\theta)}\nonumber\\
&&+l_2\tan\theta(1+l_3\tan\theta)(s_x f_+ +2C^\phi_0)+l_3\tan\theta(1+l_2\tan\theta)(s_x f_-+2C^\phi_0)\nonumber\\
\mbox{}&&=s_x(f_+ +f_-)(1+l_2\tan\theta)(1+l_3\tan\theta)~.\label{eq:Cphik-2}
\end{eqnarray}The solution of this system leads to~(\ref{eq:Cphi-def}) and~(\ref{eq:Ck-def}).

We now sketch an order-of-magnitude estimate for $\theta_c$, the lower bound
for $\theta$ in the present range of interest. By~(\ref{eq:Ck-def}), $C^k_0=O(1/\theta^2)$ for 
$\theta_c<\theta\ll 1$. 
Hence, expansion~(\ref{eq:phik-exp})
for the kink density $k^{(0)}$ breaks down when its leading-order term, $\tan\theta$, is comparable
to the correction term, $C^k_0 P$: $\theta_c=O(P/\theta_c^2)$ by which $\theta_c=O(P^{1/3})$.
Thus,~(\ref{eq:phik-exp})--(\ref{eq:Cphik-2}) hold if $O(P^{1/3})<\theta<\pi/4$.
A more accurate estimate of the lower bound requires the detailed solution of~(\ref{eq:steady-phik-1}) and
(\ref{eq:steady-phik-2}) for $\theta=O(P^{1/3})$, and will not be pursued here.
\vskip5pt

(ii)\ $0\le\theta< O(P^{1/3})$. For all practical purposes we set $\theta=0$ in~(\ref{eq:steady-phik-1})
and~(\ref{eq:steady-phik-2}). We enforce the expansions
\begin{equation}
\phi^{(0)}\sim \check C^\phi_0\,P^{\nu}~,\qquad k^{(0)}\sim \check C^k_0\,P^{\sigma}\qquad 
\check C_0^{\phi,k}=O(1)\quad \mbox{as}\ P\to 0~,
\label{eq:phik-exp-2-mnu}
\end{equation}
and find the exponents $\nu$ and $\sigma$ by {\it reductio ad absurdum}. The only values
consistent with~(\ref{eq:steady-phik-1}) and~(\ref{eq:steady-phik-2}) readily turn out to be
\begin{equation}
\nu=2/3~,\qquad \sigma=1/3~.
\label{eq:nu-sigma}
\end{equation}
These values are in agreement with the analysis in~\cite{caflischli03}.
By dominant-balance arguments, the coefficients $\check C^\phi_0$ and $\check C^k_0$ satisfy
\begin{equation}
4m_{123}\check C^\phi_0=n_{123}\,(\check C^k_0)^2~,\qquad 2l_{123}\,\check C^\phi_0\,\check C^k_0=f_+ +f_-~,
\label{eq:check-Cphik}
\end{equation}
by which we readily obtain~(\ref{eq:check-Ck}) and~(\ref{eq:check-Cphi}).
Note that the zeroth-order kink velocity becomes
\begin{equation}
w=w^{(0)}\sim 2l_{123}P^{-1/3} \check C^\phi_0=O(P^{-1/3})~.
\label{eq:w0-rg2}
\end{equation}
\vskip5pt

(iii)\ {\it Consistency of asymptotics for $\theta=O(P^{1/3})$}. As a check 
on the consistency of our asymptotics
and the estimate of $\theta_c$, we study the limits of~(\ref{eq:phik-exp-2}) and~(\ref{eq:phik-exp})
in the transition region, as $\theta\to O(P^{1/3})$. It is expected that the two sets
of formulas for $\phi$ and $k$, in $O(P^{1/3})<\theta<\pi/4$ and $0\le \theta<O(P^{1/3})$, should furnish the same order of
magnitudes.

Indeed, by letting $\theta\to O(P^{1/3})\ll 1$ in~(\ref{eq:phik-exp}) for $\phi$ 
we find $\phi=O(P/\theta)\to O(P^{2/3})$,
in agreement with~(\ref{eq:phik-exp-2}) for $\nu=2/3$. Similarly, setting $\theta=O(P^{1/3})$
in~(\ref{eq:phik-exp}) for $k$ yields $k=O(\theta)\to O(P^{1/3})$, which is consistent with~(\ref{eq:phik-exp-2})
for $\sigma=1/3$. In section~\ref{subsec:stiff} we show that such a ``matching'' is not always achieved
for the first-order corrections $\phi^{(1)}$ and $k^{(1)}$, since the corresponding
asymptotic formulas involve derivatives in $\theta$. A sufficient condition on the $\theta$-behavior
of the fluxes $f_\pm$ is sought in the latter case.\looseness=-1

In the following, we use the kinetic steady state in the mean-field law~(\ref{eq:fpm}) 
to derive mesoscopic kinetic rates as functions of $\theta$ by comparison
to the BCF-type equation~(\ref{eq:f-bcf}). We adopt two approaches.
In the first approach, $f_\pm$ are used as external, input parameters; the effective kinetic coefficients
are thus allowed to depend on $f_\pm$. In the second approach, the densities $\rho_\pm$
are the primary variables instead. 

\subsection{Flux-driven kinetics approach: ES effect}
\label{subsec:flux}

In this subsection we treat the fluxes $f_\pm$ as given, input parameters.
Accordingly, we derive~(\ref{eq:ES-Dpm}) and~(\ref{eq:Dpm-unif}), i.e. the attachment-detachment 
rates $\Da^\pm$ for adatoms. In addition, we show that the reference densities $\rho_*^\pm$ entering~(\ref{eq:GT}) are
\begin{equation}
\rho_*^\pm\sim \frac{2C^\phi_0}{\Dt(1+l_{j_\pm}\tan\theta)}\qquad j_+=2,\,j_-=3~,\quad O(P^{1/3})<\theta<\pi/4~,
\label{eq:rho-star-range1}
\end{equation}
\begin{equation}
\rho_*^\pm\sim \frac{2\check C^\phi_0}{\Dt}P^{-1/3}\qquad 0\le\theta< O(P^{1/3})~,
\label{eq:rho-star-range2}
\end{equation}
where $C^\phi_0$ and $\check C^\phi_0$ are defined by~(\ref{eq:Cphi-def}) and~(\ref{eq:check-Cphi}).

The relevant derivations follow. The substitution of~(\ref{eq:phik-exp}) into~(\ref{eq:fpm-simp}) yields
\begin{equation}
f_\pm \sim \biggl[1+l_{j_\pm}k^{(0)}+m_{j_\pm}\phi^{(0)}+\frac{n_{j_\pm}}{4}({k^{(0)}}^2-\tan^2\theta)\biggr]\Dt\rho_\pm\cos\theta-
2P^{-1}\phi^{(0)}\cos\theta~.\label{eq:f+0}
\end{equation}
Here, we view the linear-in-$\rho_\pm$ term of~(\ref{eq:f+0}) as the only physical contribution
of the adatom densities to the mass flux towards an edge. 
Consequently, by comparison to~(\ref{eq:f-bcf}),
the coefficient of this term must be identified with $\Da^\pm$. Thus,
we extract formulas~(\ref{eq:ES-Dpm}). In addition, we
obtain $D_p^\pm\equiv 0$; so, step permeability is {\it not}
manifested in this context. The reference density of~(\ref{eq:GT}) is\looseness=-1
\begin{equation}
\rho_*^\pm = \frac{2P^{-1}\phi^{(0)}\cos\theta}{\Da^\pm}~,
\label{eq:rho-star}
\end{equation}which is in principle {\it different} for an up- and down-step edge.

Note that $\Da^\pm$ and $\rho_*^\pm$ depend on $f_\pm$ within this approach.
Further, the ratio of $\Da^+$ and $\Da^-$ depends on the
values of $l_j$, $m_j$ and $n_j$. For suitable coordination numbers,
it is possible to have $\Da^+ > \Da^-$, i.e. a negative (vs. positive) ES 
effect~\cite{ehrlichhudda66,schwoebelshipsey66}, which can lead to instabilities in the step motion.
Next, we derive simplified, explicit formulas for $\Da^\pm$ and $\rho_*^\pm$ when $P\ll 1$.
\vskip5pt

(i)\ $O(P^{1/3})<\theta<\pi/4$. By substitution of~(\ref{eq:phik-exp}) with~(\ref{eq:Cphi-def})
and~(\ref{eq:Ck-def}) into~(\ref{eq:ES-Dpm}) we have
\begin{eqnarray}
\Da^+&\sim& \Dt\bigl[1+l_2\tan\theta+\bigl(l_2+\textstyle{\frac{1}{2}}n_2\tan\theta\bigr)C^k_0 P+
m_2C^\phi_0 P\bigr]\cos\theta~,\nonumber\\
\Da^-&\sim& \Dt\bigl[1+l_3\tan\theta+\bigl(l_3+\textstyle{\frac{1}{2}}n_3\tan\theta\bigr)C^k_0 P+
m_3C^\phi_0 P\bigr]\cos\theta~,
\label{eq:DpmI-range1}
\end{eqnarray}which reduce to~(\ref{eq:Dpm-unif}) as $P\to 0$.

In the same vein, by~(\ref{eq:rho-star}) the reference densities $\rho_*^\pm$ are
\begin{eqnarray}
\rho_*^+&\sim&\frac{2C^\phi_0}
{\Dt\bigl[1+l_2\tan\theta+\bigl(l_2+\textstyle{\frac{1}{2}}n_2\tan\theta\bigr)C^k_0 P+
m_2C^\phi_0 P\bigr]}~,\nonumber\\
\rho_*^-&\sim&\frac{2C^\phi_0}
{\Dt\bigl[1+l_3\tan\theta+\bigl(l_3+\textstyle{\frac{1}{2}}n_3\tan\theta\bigr)C^k_0 P+
m_3C^\phi_0 P\bigr]}~,
\label{eq:rho-starI-rangeI}
\end{eqnarray}
which readily yield~(\ref{eq:rho-star-range1}).
\vskip5pt

(ii)\ $0\le\theta< O(P^{1/3})$. In this case, we resort to~(\ref{eq:phik-exp-2}). 
Equation~(\ref{eq:ES-Dpm}) for the kinetic rates furnishes
\begin{eqnarray}
\Da^+&\sim& \Dt\bigl\{ 1+l_2{\check C}^k_0\,P^{1/3}+\bigl[ m_2{\check C}^\phi_0+\textstyle{\frac{1}{4}}n_2
({\check C}^k_0{})^2\bigr]P^{2/3}\bigr\}
\cos\theta\nonumber\\
&\sim& \Dt(1+l_2{\check C}^k_0\,P^{1/3})\cos\theta~,\nonumber\\
\Da^-&\sim& \Dt\bigl\{ 1+l_3{\check C}^k_0\,P^{1/3}+\bigl[m_3{\check C}^\phi_0+\textstyle{\frac{1}{4}}n_2
({\check C}^k_0{})^2\bigr]P^{2/3}\bigr\}
\cos\theta\nonumber\\
&\sim& \Dt(1+l_3{\check C}^k_0\,P^{1/3})\cos\theta~.\label{eq:Dpm-range2}
\end{eqnarray}
To leading order in $P$, these formulas connect smoothly with~(\ref{eq:DpmI-range1})
and, thus, justify~(\ref{eq:Dpm-unif}) for $0\le\theta<\pi/4$.
Furthermore, $\rho_*^\pm$ are given by
\begin{eqnarray}
\rho_*^+ &\sim& \frac{2\check C^\phi_0\,P^{-1/3}}{\Dt(1+l_2\check C^k_0\,P^{1/3})}
\sim \frac{2\check C^\phi_0}{\Dt}P^{-1/3}(1-l_2\check C^k_0\,P^{1/3})~,\nonumber\\
\rho_*^- &\sim& \frac{2\check C^\phi_0\,P^{-1/3}}{\Dt(1+l_3\check C^k_0\,P^{1/3})}
\sim \frac{2\check C^\phi_0}{\Dt}P^{-1/3}(1-l_3\check C^k_0\,P^{1/3})~,
\label{eq:rho-starI-range2}
\end{eqnarray}which reduce to~(\ref{eq:rho-star-range2}). Notably, $\rho_*^\pm$ depend 
on the fluxes, $f_\pm$, through $\check C_0^\phi$.

A few remarks are in order. First, by~(\ref{eq:Dpm-unif}) the ES
effect is present for $O(P^{1/3})<\theta<\pi/4$ only 
if $l_2\neq l_3$. Accordingly, our formalism provides explicitly an analytical relation between
the number of transition paths for atomistic processes and the mesoscopic kinetic rates.
Second, formulas~(\ref{eq:Dpm-range2}) show that for $P\ll 1$ the nonzero ES barrier is a corrective, 
$O(P^{1/3})$ effect
for sufficiently small $\theta$, even when $l_2\neq l_3$. 

\subsection{Density-driven approach: Step permeability and ES effect}
\label{subsec:density}
In this subsection we show that the treatment of the densities $\rho_\pm$ as independent, external parameters in
the kinetic law~(\ref{eq:fpm-simp}) leads to coexistence of the ES effect and step permeability.
In particular, the permeability and attachment-detachment rates
are provided by~(\ref{eq:perm-Dpm})--(\ref{eq:Apm-def}).

To derive~(\ref{eq:perm-Dpm}) and~(\ref{eq:Dpm-ES-dens}),
we solve~(\ref{eq:f+0}) for $f_\pm$, which are viewed as {\it dependent} variables,
taking into account that $\phi^{(0)}$ and $k^{(0)}$ depend on $f_\pm$. To simplify the algebra
while keeping the essential physics intact, we restrict attention to $O(P^{1/3})<\theta< \pi/4$.

First, in view of~(\ref{eq:phik-exp}) we further simplify relation~(\ref{eq:fpm-simp}). By 
\begin{equation}
\phi^{(0)}=\textstyle{\frac{1}{2}}(A_+ f_+ + A_- f_-) P~,\label{eq:phi0-alt}
\end{equation}
where $A_\pm$ are defined by~(\ref{eq:Apm-def}), the adatom fluxes at the step edge reduce to
\begin{equation}
f_\pm\sim (1+l_{j_\pm}\tan\theta)\Dt\rho_\pm\cos\theta-(A_+ f_+ + A_- f_-)\cos\theta~,\qquad P\ll 1~.
\label{eq:fpm-alt}
\end{equation}

Second, we invert~(\ref{eq:fpm-alt}) to obtain $f_\pm$ in terms of $\rho_\pm$.
Equation (\ref{eq:fpm-alt}) reads
\begin{eqnarray}
(1+A_+\cos\theta)f_+ +\cos\theta\,A_- f_-&=&(1+l_2\tan\theta)\Dt\rho_+\cos\theta~,\nonumber\\
\cos\theta\,A_+ f_+ + (1+\cos\theta\,A_-)f_-&=& (1+l_3\tan\theta)\Dt\rho_-\cos\theta~.
\label{eq:A+-alt}
\end{eqnarray}
The inversion of this system yields
\begin{eqnarray}
&&f_+ = \biggl[\frac{(1+l_2\tan\theta)(1+\cos\theta\,A_-)}{1+(A_+ +A_-)\cos\theta}\Dt\rho_+
-\frac{A_- (1+l_3\tan\theta)\cos\theta}{1+(A_+ + A_-)\cos\theta}\Dt\rho_-\biggr]\cos\theta~,\nonumber\\
&&f_- = \biggl[-\frac{(1+l_2\tan\theta)A_+\cos\theta}{1+(A_+ +A_-)\cos\theta}\Dt\rho_+
+\frac{(1+A_+ \cos\theta)(1+l_3\tan\theta)}{1+(A_+ + A_-)\cos\theta}\Dt\rho_-\biggr] \cos\theta~.\nonumber
\end{eqnarray}
These relations have the form of the kinetic law~(\ref{eq:f-bcf}); by comparison, the rates 
$D_p^\pm$ are given by~(\ref{eq:perm-Dpm}), while
\begin{equation}
\rho_0^\pm \equiv 0\Rightarrow \rho_*^\pm\equiv 0~.
\label{eq:rho-star-perm}
\end{equation}This value is expected since the system
is homogeneous in this setting, i.e. $f_\pm=0$ only if $\rho_\pm=0$. The reference
density $\rho_0$ becomes nonzero (but small in an appropriate sense) if we allow in the formulation
nonzero values for $D_B$ and $D_K$, i.e. nonzero diffusion coefficients
for an atom to hop from a kink and a straight edge. The study of these effects lies beyond our
present scope. Equation (\ref{eq:rho-star-perm}) challenges the definition 
of the step stiffness; see section~\ref{subsec:extension}. 

Equations (\ref{eq:A+-alt}) also predict an ES effect. Indeed, by
recourse to~(\ref{eq:f-bcf}), the related attachment-detachment rates are
\begin{equation}
\Da^\pm =\frac{(1+l_{j_\pm}\tan\theta)(1+A_\mp\cos\theta)}{1+(A_+ +A_-)\cos\theta}\Dt\cos\theta-D_p^\pm~,
\label{eq:Dpm-prim}
\end{equation}
which readily yields~(\ref{eq:Dpm-ES-dens}) by use of~(\ref{eq:perm-Dpm}).

The behavior of the fluxes $f_\pm$ as functions of $\rho_\pm$ is 
dramatically different for $0\le\theta < O(P^{1/3})$. 
Indeed, by~(\ref{eq:phik-exp-2})--(\ref{eq:check-Cphi}) 
the density $\phi^{(0)}$ is a nonlinear algebraic 
function of $f_+ + f_-$ in this case. Thus, 
the mean-field constitutive equations in principle {\it cannot} reduce to
kinetic laws that are linear in $\rho_\pm$. 
This approach does not lead to standard BCF-type conditions at a high-symmetry
step edge orientation. The implications of this behavior warrant further studies.

In the following
analysis for the stiffness we emphasize the flux-driven approach.

\section{Perturbation theory and step stiffness}
\label{sec:GT-stiff}
In this section we consider slightly curved step edges, and apply perturbation theory to
find approximately the edge-atom and kink densities, $\phi$ and $k$, from the kinetic
model of section~\ref{subsec:noneq-kin}. On the basis of the linear kinetic law~(\ref{eq:f-bcf}) 
along with~(\ref{eq:GT}) for $\rho_0$,
we calculate the step stiffness, $\tbe$, as a function of the orientation angle, $\theta$; 
see formulas~(\ref{eq:stiff-smalltheta})--(\ref{eq:k0-asymp1}). 
The underlying perturbation scheme for the densities is outlined in appendix~\ref{app:lin-pert}.

The starting point is expansion~(\ref{eq:phk-app}), which we assume to be valid for 
$0\le\theta < \pi/4$ and view as a Taylor series.
The functions $\phi^{(0)}$ and $k^{(0)}$ correspond to
the kinetic steady state of section~\ref{sec:steadyPdependence}. The first-order coefficients $\phi^{(1)}$ and $k^{(1)}$ are
locally bounded and are evaluated
below. Only the coefficient $\phi^{(1)}$ is needed for the calculation of the step stiffness, $\tbe$,
by~(\ref{eq:fpm-simp});
for completeness, we also derive $k^{(1)}$.

The relation of $\tbe$ to $\phi^{(0)}$ and $\phi^{(1)}$ is provided by the following argument. By substitution
of~(\ref{eq:phk-app}) into~(\ref{eq:fpm-simp}) and treatment of $f_\pm$ as given external parameters
(in the spirit of section~\ref{subsec:flux}), we obtain
\begin{equation}
\frac{f_\pm}{\cos\theta}=[1+l_{j_\pm}k^{(0)}+m_{j_\pm}\phi^{(0)}+\textstyle{\frac{n_{j_\pm}}{4}}
({k^{(0)}}^2-\tan^2\theta)]\Dt\rho_\pm -\De\phi^{(0)}-\kappa \De\phi^{(1)}~,
\label{eq:fpm-kappa}
\end{equation}where $j_+=2$ and $j_-=3$. By comparison of~(\ref{eq:fpm-kappa})
to~(\ref{eq:GT}) and~(\ref{eq:f-bcf}), we have (using $2P^{-1}=D_E$)
\begin{equation}
\Da^\pm\rho_*^\pm\,\frac{\tbe}{k_B T}=2P^{-1}\,\phi^{(1)}\,\cos\theta~,\qquad \label{eq:lten}
\end{equation}
by which we assert~(\ref{eq:stiff-intro}) in view of~(\ref{eq:rho-star}). Our task is to 
calculate $\phi^{(1)}$ in terms of $\theta$ and $P$ when $P\ll 1$.

\subsection{Linear perturbations}
\label{subsec:pertb}
In this subsection we derive formula~(\ref{eq:ph01-asymp1}) for $\phi^{(1)}$ along with 
(\ref{eq:Cph1-def}) and (\ref{eq:Wphi})--(\ref{eq:Hk}) 
when $O(P^{1/3})<\theta <\pi/4$. In addition, we show that in this regime
\begin{equation}
k^{(1)}\sim -\frac{(v^{(0)}_\theta+v^{(0)}\tan\theta)
(\cos\theta)^{-1}+(w^{(0)}\tan\theta)_\theta}{2H^k\cos\theta}~.
\label{eq:k1-app1}
\end{equation}For $0\le\theta < O(P^{1/3})$, $\phi^{(1)}$ and $k^{(1)}$ are
\begin{equation}
\phi^{(1)}\sim 4l_{123}\,\frac{\check C^\phi_0}{n_{123} (\check C^k_0)^2+8m_{123}}=O(1)~,
\label{eq:phi1-rg2}
\end{equation}
\begin{equation}
k^{(1)}\sim -4P^{-1/3}l_{123}\,\frac{\check C^k_0}{n_{123}(\check C^k_0)^2+8m_{123}\check C^\phi_0}
=O(P^{-1/3})~.
\label{eq:k1-rg2}
\end{equation}
Recall that $\check C^k_0$ and $\check C^\phi_0$ are defined by~(\ref{eq:check-Ck})
and~(\ref{eq:check-Cphi}). 
Furthermore, we demonstrate that $|\kappa|$ should be bounded by $P$ for the perturbation theory to hold;
see~(\ref{eq:restr-kappa}).

We proceed to carry out the derivations.
Following appendix~\ref{app:lin-pert}, we formulate a $2\times 2$ system of linear perturbations
for $\phi$ and $k$. First, we linearize the algebraic, constitutive laws~(\ref{eq:w-def})--(\ref{eq:h-def}).
Expansions~(\ref{eq:phk-app}) induce the approximations
$w(\phi,k)\sim w(\phi^{(0)},k^{(0)})+\kappa\, w^{(1)}$, $g(\phi,k)\sim g(\phi^{(0)},k^{(0)})+\kappa\, g^{(1)}$ 
and $h(\phi,k)\sim h(\phi^{(0)},k^{(0)})+\kappa\, h^{(1)}$, where
\begin{eqnarray}
w^{(1)}&=&\phi^{(1)}\,w_\phi+k^{(1)}\,w_k\qquad [w_\phi:=\partial_\phi w(\phi,k)]~,\nonumber\\
g^{(1)}&=&\phi^{(1)}\,g_\phi+k^{(1)}\,g_k~,\quad h^{(1)}=\phi^{(1)}\,h_\phi+k^{(1)}\,h_k~.
\label{eq:wgh1}
\end{eqnarray}In addition, $v^{(0)}=f_+ +f_-$ and $g^{(0)}:=g(\phi^{(0)},k^{(0)})=h(\phi^{(0)},k^{(0)})=:h^{(0)}$.
Second, we replace the above expansions in the equations of motion~(\ref{eq:phi-pde-th}) and~(\ref{eq:k-pde-th})
and the constitutive law~(\ref{eq:f0}). Hence, we find the system
\begin{eqnarray}
\lefteqn{(w_\phi k^{(0)}+2g_\phi+h_\phi)\phi^{(1)}+(w_k k^{(0)}+w^{(0)}+2g_k+h_k)k^{(1)}=-(v^{(0)}_\theta+v^{(0)}\tan\theta)
\phi^{(0)}_\theta,}\nonumber\\
&&\mbox{}\hskip10pt 2(g_\phi-h_\phi)\phi^{(1)}+2(g_k-h_k)k^{(1)}=(v^{(0)}_\theta+v^{(0)}\tan\theta)k^{(0)}_\theta+s_x
(w^{(0)}\tan\theta)_\theta~,
\label{eq:phi-k-1}
\end{eqnarray}
where $w^{(0)}:=w(\phi^{(0)}, k^{(0)})$ and $s_x=1/\cos\theta$. This system has solution
\begin{equation}
\phi^{(1)}=\frac{\mathcal D^\phi}{\mathcal D}~,\qquad k^{(1)}=\frac{\mathcal D^k}{\mathcal D}~,
\label{eq:phi-k-1-sol}
\end{equation}
where
\begin{equation}
\mathcal D=\left|\begin{array}{ll}w_\phi k^{(0)}+(2g+h)_\phi\ &\ w_k k^{(0)}+w^{(0)}+(2g+h)_k\\
                                  2(g-h)_\phi\ &\ 2(g-h)_k\end{array}\right|~,\label{eq:D-determ}
\end{equation}
\begin{equation}
\mathcal D^\phi=\left|\begin{array}{ll}-(v^{(0)}_\theta+v^{(0)}\tan\theta)\phi^{(0)}_\theta & w_k k^{(0)}+w^{(0)}+2(g+h)_k\\
                                        (v^{(0)}_\theta+v^{(0)}\tan\theta)k^{(0)}_\theta+s_x(w^{(0)}\tan\theta)_\theta\
 & 2(g-h)_k\end{array}\right|,\label{eq:Dphi-determ}
\end{equation}
\begin{equation}
\mathcal D^k=\left|\begin{array}{ll}w_\phi k^{(0)} +(2g+h)_\phi & -(v^{(0)}_\theta+v^{(0)}\tan\theta)\phi^{(0)}_\theta\\
                                    2(g-h)_\phi\ &\ (v^{(0)}_\theta+v^{(0)}\tan\theta)k^{(0)}_\theta+s_x
(w^{(0)}\tan\theta)_\theta\end{array}\right|~.
\label{eq:Dk-determ}
\end{equation}Note that $\phi^{(1)}$ and $k^{(1)}$ depend on the $\theta$-derivatives
of the zeroth-order (kinetic steady-state) solutions. 

By~(\ref{eq:w-def})--(\ref{eq:h-def}), 
we calculate the $\phi$- and $k$-derivatives of $w$, $g$ and $h$:
\begin{eqnarray}
w_\phi&=&2l_1 P^{-1}+\sum_{q=+,-}l_{j_q}\biggl\{\frac{2P^{-1}}{1+l_{j_q} k^{(0)}+m_{j_q} \phi^{(0)}+
\frac{n_{j_q}}{4}({k^{(0)}}^2-\tan^2\theta)}\nonumber\\
&&\mbox{}-m_{j_q}\,\frac{(\cos\theta)^{-1}\,f_q+2P^{-1}\phi^{(0)}}
{[1+l_{j_q}k^{(0)}+m_{j_q}\phi^{(0)}+
\frac{n_{j_q}}{4}({k^{(0)}}^2-\tan^2\theta)]^2}\biggr\}~,\quad j_+ =2,\,j_-=3~,
\label{eq:wphi-deriv}
\end{eqnarray}
\begin{equation}
w_k=-\sum_{q=+,-}l_{j_q}\bigl(l_{j_q}+{\textstyle\frac{n_{j_q}}{2}}k^{(0)}\bigr)\,
\frac{(\cos\theta)^{-1}\,f_q +2P^{-1}\phi^{(0)}}{[1+l_{j_q}k^{(0)}+m_{j_q}\phi^{(0)}+
\frac{n_{j_q}}{4}({k^{(0)}}^2-\tan^2\theta)]^2}~,
\label{eq:wk-deriv}
\end{equation}
\begin{eqnarray}
g_\phi&=& 4m_1 P^{-1}\phi^{(0)}+\sum_{q=+,-}m_{j_q}\biggl\{\frac{(\cos\theta)^{-1}\,f_q +4P^{-1}\phi^{(0)}}{1+l_{j_q}k^{(0)}+m_{j_q}
\phi^{(0)}+\frac{n_{j_q}}{4}({k^{(0)}}^2-\tan^2\theta)}\nonumber\\
\mbox{} && -m_{j_q}\phi^{(0)}\frac{(\cos\theta)^{-1}\,f_q +2P^{-1}\phi^{(0)}}{[1+l_{j_q}k^{(0)}+m_{j_q}\phi^{(0)}+
\frac{n_{j_q}}{4}({k^{(0)}}^2-\tan^2\theta)]^2}\biggr\}~,\label{eq:gphi-deriv}
\end{eqnarray}
\begin{equation}
g_k = -\phi^{(0)}\sum_{q=+,-}m_{j_q}\frac{\bigl(l_{j_q}+{\textstyle\frac{n_{j_q}}{2}}k^{(0)}\bigr)[(\cos\theta)^{-1}\,f_q+2P^{-1}\phi^{(0)}]}
{[1+l_{j_q}k^{(0)}+m_{j_q}\phi^{(0)}+\frac{n_{j_q}}{4}({k^{(0)}}^2-\tan^2\theta)]^2}~,\label{eq:gk-deriv}
\end{equation}
\begin{eqnarray}
h_\phi&=& {\textstyle\frac{1}{4}}({k^{(0)}}^2-\tan^2\theta)\biggl\{2n_1 P^{-1}
+\sum_{q=+,-}\biggl[\frac{2P^{-1}n_{j_q}}{1+l_{j_q}k^{(0)}+m_{j_q}\phi^{(0)}+\frac{n_{j_q}}{4}({k^{(0)}}^2-\tan^2\theta)}\nonumber\\
&&\hskip20pt -m_{j_q}n_{j_q}\,\frac{(\cos\theta)^{-1}\,f_q+2P^{-1}\phi^{(0)}}{[1+l_{j_q}k^{(0)}+m_{j_q}\phi^{(0)}+\frac{n_{j_q}}{4}
({k^{(0)}}^2-\tan^2\theta)]^2}\biggr]\biggr\}~,
\label{eq:hphi-deriv}
\end{eqnarray}
\begin{eqnarray}
h_k&=& \frac{k^{(0)}}{2}\biggl[2n_1 P^{-1}\phi^{(0)}+\sum_{q=+,-}n_{j_q}\frac{(\cos\theta)^{-1}f_q+2P^{-1}\phi^{(0)}}
{1+l_{j_q}k^{(0)}+m_{j_q}\phi^{(0)}+\frac{n_{j_q}}{4}({k^{(0)}}^2-\tan^2\theta)}\biggr]\nonumber\\
&&\hskip20pt -{\textstyle\frac{1}{4}}({k^{(0)}}^2-\tan^2\theta)\sum_{q=+,-}n_{j_q}\frac{(l_{j_q}+
\frac{n_{j_q}}{2}k^{(0)})\,
[(\cos\theta)^{-1}f_q+2P^{-1}\phi^{(0)}]}{[1+l_{j_q}k^{(0)}+m_{j_q}\phi^{(0)}+\frac{n_{j_q}}{4}({k^{(0)}}^2-\tan^2\theta)]^2}~.
\label{eq:hk-deriv}
\end{eqnarray}

Equations~(\ref{eq:phi-k-1-sol})--(\ref{eq:hk-deriv}) are simplified under
the condition $P\ll 1$, which we apply next. We distinguish two ranges for the angle $\theta$.
\vskip5pt

(i)\ $O(P^{1/3})<\theta< \pi/4$. We proceed to show~(\ref{eq:ph01-asymp1}) 
and~(\ref{eq:Cph1-def}) for $\phi^{(1)}$. By using~(\ref{eq:phik-exp}) 
with~(\ref{eq:Cphi-def}) and~(\ref{eq:Ck-def}),
we replace $\phi^{(0)}$ and $k^{(0)}$ by their expansions in $P$. Thus, the derivatives
of $w$, $g$ and $h$ are simplified to
\begin{equation}
w_\phi \sim P^{-1}\biggl(2l_1+\frac{2l_2}{1+l_2\tan\theta}+\frac{2l_3}{1+l_3\tan\theta}\biggr)=:P^{-1}\, W^\phi=O(P^{-1}
)~,
\label{eq:wphi-app1}
\end{equation}
\begin{equation}
w_k\sim -\sum_{q=+,-}l_{j_q}\bigl(l_{j_q}+{\textstyle\frac{n_{j_q}}{2}}\tan\theta\bigr)
\frac{(\cos\theta)^{-1}f_q+2C^\phi_0}{(1+l_{j_q}\tan\theta)^2}=O(1)~,
\label{eq:wk-app1}
\end{equation}
\begin{equation}
g_\phi\sim 4m_1 C^\phi_0+\sum_{q=+,-}m_{j_q}\,\frac{(\cos\theta)^{-1}f_q+4 C^\phi_0}{1+l_{j_q}\tan\theta}=O(1)~,
\label{eq:gphi-app1}
\end{equation}
\begin{equation}
g_k\sim -P C^\phi_0\sum_{q=+,-}m_{j_q}\bigl(l_{j_q}+{\textstyle\frac{n_{j_q}}{2}}\tan\theta\bigr)
\frac{(\cos\theta)^{-1}f_q+2C^\phi_0}{(1+l_{j_q}\tan\theta)^2}=O(P)~,
\label{eq:gk-app1}
\end{equation}
\begin{equation}
h_\phi \sim C^k_0\,\tan\theta\biggl(n_1+\frac{n_2}{1+l_2\tan\theta}+\frac{n_3}{1+l_3\tan\theta}\biggr)=O(1)~,
\label{eq:hphi-app1}
\end{equation}
\begin{equation}
h_k \sim \frac{\tan\theta}{2}\biggl( 2n_1 C^\phi_0+\sum_{q=+,-}n_{j_q}
\frac{(\cos\theta)^{-1}f_q+2C^\phi_0}{1+l_{j_q}\tan\theta}\biggr)=O(1)~.
\label{eq:hk-app1}
\end{equation}

It follows that the determinants of~(\ref{eq:D-determ})--(\ref{eq:Dk-determ}) are
\begin{equation}
\mathcal D\sim -2P^{-1}\,h_kW^\phi\tan\theta~,
\label{eq:D-determ-app1}
\end{equation}
\begin{equation}
\mathcal D^\phi\sim -(w_k\tan\theta+w^{(0)}+h_k)[(v^{(0)}_\theta+v^{(0)}\tan\theta)k^{(0)}_\theta+(\cos\theta)^{-1}(w^{(0)}\tan\theta)_\theta]~,
\label{eq:Dphi-determ-app1}
\end{equation}
\begin{equation}
\mathcal D^k\sim P^{-1}W^\phi\frac{\tan\theta}{\cos\theta}[(v^{(0)}_\theta+v^{(0)}\tan\theta)
(\cos\theta)^{-1}+(w^{(0)}\tan\theta)_\theta]~.
\label{eq:Dk-determ-app1}
\end{equation}
Hence, in view of~(\ref{eq:phi-k-1-sol}), the coefficient $\phi^{(1)}$ is given by~(\ref{eq:ph01-asymp1})
with~(\ref{eq:Cph1-def}) and~(\ref{eq:v0-def})--(\ref{eq:k0-asymp1}) under
the replacements $H^k:=h_k$, $W^\phi:= Pw_\phi$ and $W^k:=w_k$. By~(\ref{eq:phi-k-1-sol}), the
corresponding coefficient $k^{(1)}$ is given by~(\ref{eq:k1-app1}).
\vskip5pt

(ii)\ $0\le\theta < O(P^{1/3})$. We now calculate the first-order corrections $\phi^{(1)}$
and $k^{(1)}$ by~(\ref{eq:phi-k-1-sol})--(\ref{eq:hk-deriv}) with recourse to 
formula~(\ref{eq:phik-exp-2}) with~(\ref{eq:check-Ck}) and~(\ref{eq:check-Cphi}). 

We start with~(\ref{eq:phi-k-1-sol}).
The requisite derivatives of $w$, $g$ and $h$ in the present case (where practically $\theta=0$) reduce to
\begin{equation}
w_\phi \sim 2l_{123} P^{-1}=O(P^{-1})~,\label{eq:wphi-rg2}
\end{equation}
\begin{equation}
w_k\sim -2P^{-1/3}\,(l_2^2+l_3^2)\check C^\phi_0=O(P^{-1/3})~,
\label{eq:wk-rg2}
\end{equation}
\begin{equation}
g_\phi\sim 4m_{123}P^{-1/3}\,\check C^\phi_0=O(P^{-1/3})~,
\label{eq:gphi-rg2}
\end{equation}
\begin{equation}
g_k\sim -2P^{1/3}(m_2 l_2+m_3 l_3) (\check C^\phi_0)^2=O(P^{1/3})~,
\label{eq:gk-rg2}
\end{equation}
\begin{equation}
h_\phi\sim {\textstyle\frac{1}{2}}n_{123}P^{-1/3} (\check C^k_0)^2=O(P^{-1/3})~,
\label{eq:hphi-rg2}
\end{equation}
\begin{equation}
h_k \sim n_{123}{\check C^k_0}\,{\check C^\phi_0}=O(1)~.
\label{eq:hk-rg2}
\end{equation}Note that $w^{(0)}$ is given by~(\ref{eq:w0-rg2}).

It follows that the determinants $\mathcal D$, $\mathcal D^\phi$ and $\mathcal D^k$ of
(\ref{eq:D-determ})--(\ref{eq:Dk-determ}) become
\begin{equation}
\mathcal D\sim -P^{-2/3} {\check C^\phi_0} l_{123}[n_{123} (\check C^k_0)^2+8 m_{123}\check C^\phi_0)=O(P^{-2/3})~,
\label{eq:D-rg2}
\end{equation}
\begin{equation}
\mathcal D^\phi \sim -4P^{-2/3} l_{123}^2 {\check C^\phi_0}\ (\check C^\phi_0\theta)_\theta\bigl|_{\theta=0}=O(P^{-2/3})~,
\label{eq:Dphi-rg2}
\end{equation}
\begin{equation}
\mathcal D^k \sim w_\phi k^{(0)} w^{(0)}\sim 4P^{-1}l_{123}^2 \check C^k_0 \check C^\phi_0=O(P^{-1})~.
\label{eq:Dk-rg2}
\end{equation}Since $\partial_\theta(\check C^\phi_0)$ is finite at $\theta=0$,~(\ref{eq:phi1-rg2})
and~(\ref{eq:k1-rg2}) ensue directly via~(\ref{eq:phi-k-1-sol}).
\vskip5pt

(iii)\ {\it Transition region,} $\theta=O(P^{1/3})$. Next, we study the limits of the
$\phi^{(1)}$ and $k^{(1)}$ found above when $\theta$ enters the transition region, $\theta\to O(P^{1/3})$.

First, we consider $\phi^{(1)}$ in the range $\theta >O(P^{1/3})$ and take $\theta\ll 1$. 
By~(\ref{eq:Cphi-def}) and~(\ref{eq:Q-def})--(\ref{eq:k0-asymp1}), we find $H^k=O(1)$,
$W^k=O(1/\theta)$, $W^\phi=O(1)$, and
\begin{equation}
w^{(0)}= \frac{f_+ + f_-}{\theta}+O(\theta)=O\biggl(\frac{1}{\theta}\biggr)\Rightarrow (w^{(0)}\theta)_\theta=
(f_+ +f_-)_{\theta}|_{\theta=0}+O(\theta)~,
\label{eq:w0-smalltheta}
\end{equation}
\begin{equation}
(v^{(0)}_\theta+v^{(0)}\tan\theta)k^{(0)}_\theta+(\cos\theta)^{-1}(w^{(0)}_\theta\tan\theta)_\theta=
2(f_+ +f_-)_\theta|_{\theta=0}+O(\theta)~.
\label{eq:num-smalltheta}
\end{equation}
Hence, assuming $(f_+ +f_-)_\theta\neq 0$ at $\theta=0$, we have
\begin{equation}
\phi^{(1)}=O(P/\theta^2)\cdot O((f_++f_-)_\theta)\qquad O(P^{1/3})<\theta\ll 1~,
\label{eq:phi1-ord-rg1}
\end{equation}
which becomes $O(P^{1/3}(f_+ +f_-)_\theta)$ as $\theta\to O(P^{1/3})$.
On the other hand, by~(\ref{eq:phi1-rg2}) we get $\phi^{(1)}=O(1)$
when $0\le \theta < O(P^{1/3})$. This behavior is not in agreement with~(\ref{eq:phi1-ord-rg1})
unless $(f_+ + f_-)_\theta = O(P^{-1/3})$, i.e. the
fluxes vary over angles $O(P^{1/3})$, $f_\pm=\breve f_\pm(P^{-1/3}\theta)$
for $\theta=O(P^{1/3})$. This behavior of $f_\pm$ is not compelling, since it is generally expected that the agreement
in orders of magnitude is spoiled by the $\theta$-differentiation.\looseness=-1

We next consider $k^{(1)}$. By~(\ref{eq:k1-app1}) we find $k^{(1)}=O((f_+ +f_-)_\theta)$ for
$O(P^{1/3})<\theta\ll 1$. On the other hand, by~(\ref{eq:k1-rg2}), $k^{(1)}=O(P^{-1/3})$ for $0\le\theta <O(P^{1/3})$.
The two orders of magnitude agree if $(f_+ + f_-)_\theta = O(P^{-1/3})$ as above.

\subsection{Condition on $\kappa$ and $P$}
\label{sssec:limitn}
Thus far, we have not provided any condition for the validity of our perturbation
analysis. Such a condition would impose a constraint on $\kappa$
and $P$. In principle, $\kappa$ is a dynamic variable. For appropriate initial data,
the step edges are assumed to evolve to the kinetic steady state with $\kappa=0$. Small deviations from this
state can be treated within our perturbation framework if
\begin{equation}
|\kappa\phi^{(1)}|\ll \phi^{(0)},\qquad |\kappa\,k^{(1)}|\ll k^{(0)}~.
\label{eq:phk-constr}
\end{equation}

By revisiting the formulas of sections~\ref{sec:steadyPdependence} and~\ref{sec:GT-stiff} for
$\phi^{(j)}$ and $k^{(j)}$, we can give an order-of-magnitude estimate of an upper bound for $\kappa$.
By comparison of the $O(P)$ correction term for 
$k^{(0)}$ in~(\ref{eq:k0-asymp1-corr}) to $k^{(1)}$ in~(\ref{eq:k1-app1}), where $\theta=O(1)$, we obtain 
\begin{equation}
|\kappa|< O(P)~.
\label{eq:restr-kappa}
\end{equation}

\subsection{Step stiffness}
\label{subsec:stiff}
Once $\phi^{(0)}$ and $\phi^{(1)}$ have been derived, the step stiffness follows.
We invoke the formulation of section~\ref{subsec:pertb}
on the basis of formula~(\ref{eq:stiff-intro}) by using the fluxes $f_\pm$
as input external parameters. In particular, we show the limiting behaviors~(\ref{eq:stiff-smalltheta})
and~(\ref{eq:stiff-rg2}) for small $\theta$. In correspondence to section~\ref{subsec:pertb},
we use two distinct regimes.
\vskip5pt

(i)\ $O(P^{1/3})<\theta <\pi/4$. By~(\ref{eq:lten}) and the analysis in section~\ref{subsec:pertb}, 
$\tbe$ is given by~(\ref{eq:stiff-intro})--(\ref{eq:k0-asymp1}). Specifically,
\begin{equation}
\frac{\tbe}{k_B T}\sim\frac{C^\phi_1}{C^\phi_0}~,\label{eq:stiff-rg1}
\end{equation}
which is an $O(1)$ quantity in $P$ when $\theta=O(1)$.
In order to compare this result to a recent equilibrium-based calculation
for the stiffness~\cite{stasevich06}, we take $O(P^{1/3})<\theta\ll 1$.
Then, by~(\ref{eq:Cphi-def}),
\begin{equation}
C_0^\phi\sim \frac{1}{\theta}\,\frac{\fp +\fm}{l_{123}}=O\biggl(\frac{1}{\theta}\biggr)~.
\label{C0-phi-lim}
\end{equation}
In addition, if $(f_+ +f_-)_\theta \neq 0$ as $\theta\to 0^+$, 
by~(\ref{eq:Cph1-def}) and~(\ref{eq:num-smalltheta}) we find
\begin{equation}
C_1^\phi\sim \frac{(\fp +\fm)_\theta|_{\theta=0}}{n_{123}\,\theta^2}=O\biggl(\frac{1}{\theta^2}\biggr)~.\label{eq:C1-phi-lim}
\end{equation}
Thus,~(\ref{eq:stiff-smalltheta}) follows from~(\ref{eq:stiff-rg1}).
By contrast, if $(f_+ + f_-)_\theta$ vanishes in the limit $\theta\to 0$ then, by~(\ref{eq:num-smalltheta}),
$\tbe/(k_B T)=O(1)$.
\vskip5pt

(ii)\ $0\le\theta < O(P^{1/3})$. In view of~(\ref{eq:stiff-intro}) with (\ref{eq:phik-exp-2})
and~(\ref{eq:phi1-rg2}), we readily obtain formula~(\ref{eq:stiff-rg2}) for $\tbe$.
\vskip5pt

(iii)\ $\theta\to O(P^{1/3})$. Formula~(\ref{eq:stiff-rg2}) 
is consistent with the $O(1/\theta)$ behavior of $\tbe$ for $O(P^{1/3})<\theta\ll 1$
provided that $(f_+ +f_-)_\theta =O(P^{-1/3})$. Indeed, from~(\ref{eq:stiff-rg1}) via~(\ref{eq:num-smalltheta})
we have $\tbe/(k_B T)= O((f_+ +f_-)_\theta/\theta)$, which properly reduces to~(\ref{eq:stiff-rg2}).
Again, this ``matching'' is not compelling since $\theta$-derivatives are involved.

\subsection{Alternative view}
\label{subsec:extension}
We consider $\theta=O(1)$ and focus briefly on the implications for the stiffness of treating
the adatom densities $\rho_\pm$ as input parameters. This approach is mathematically
equivalent to that of section~\ref{subsec:stiff}; only the physical definitions are altered
in recognition of $\rho_\pm$ as the driving parameters. This viewpoint was partly followed in
section~\ref{subsec:density} for straight step edges ($\kappa=0$).\looseness=-1

We show that the adatom fluxes have the form
\begin{equation}
f_\pm= \Da^{\pm}\rho_\pm\pm D_p^\pm (\rho_+ - \rho_-)-\be(\theta;\rho_+,\rho_-)\cdot\kappa~.
\label{eq:fpm-ss}
\end{equation}
The coefficients $D_p^\pm(\theta)$ and $\Da^\pm(\theta)$ are defined by~(\ref{eq:perm-Dpm}) 
and~(\ref{eq:Dpm-ES-dens}); and
\begin{equation}
\be=\frac{2 \,C^\phi_1}{1+(A_+ +A_-)\cos\theta}\,\cos\theta\qquad O(P^{1/3})<\theta<\pi/4~,
\label{eq:ss-def}
\end{equation}
where $C^\phi_1$ and $A_\pm$ are defined by~(\ref{eq:Cph1-def}) and~(\ref{eq:Apm-def}). 
Furthermore, the $f_\pm$-dependent $C^\phi_1$ is now evaluated
at $f_\pm=\Da^\pm\rho_\pm \pm D_p^\pm(\rho_+ -\rho_-)$; thus, $\be$ becomes $\rho$-dependent.  Notably,
\begin{equation}
\be = O(1/\theta)\qquad O(P^{1/3})<\theta\ll 1~.
\label{eq:ss-1overth}
\end{equation}
As noted in section \ref{stepStiffness}, these results do not have the usual form 
since $\be$ is not proportional to $\rho_*$.

We derive~(\ref{eq:fpm-ss})--(\ref{eq:ss-1overth}) directly from~(\ref{eq:fpm-kappa})
by treating the term $\kappa\,\phi^{(1)}$ as a perturbation.
For $\kappa\phi^{(1)}=0$ (section~\ref{subsec:density}), (\ref{eq:f-bcf}) for $f_\pm$ is recovered 
with $\rho_0=0$; see~(\ref{eq:rho-star-perm}). 
For $\kappa\neq 0$, (\ref{eq:fpm-kappa}) reads
\begin{equation}
f_\pm \sim (1+l_{j_\pm}\tan\theta) \Dt\rho_\pm\cos\theta-(A_+ f_+ + A_- f_-)\cos\theta-C^\phi_1\kappa\cos\theta~.
\label{eq:fpm-kappa-2}
\end{equation}By viewing $C^\phi_1$ as a given external parameter,
we solve the linear equations~(\ref{eq:fpm-kappa-2}) 
for $f_\pm$ and find~(\ref{eq:fpm-ss}) with~(\ref{eq:ss-def});
$\be$ follows as a function of $\rho_\pm$ by a single iteration.  

We now take $\theta\ll 1$. By~(\ref{eq:Apm-def}), $A_\pm=O(1/\theta)$ while by~(\ref{eq:phi1-ord-rg1})
we have $C^\phi_1=O(1/\theta^2)$ assuming $(\fp +\fm)_\theta=O(1)\neq 0$.
Thus, (\ref{eq:ss-def}) leads to~(\ref{eq:ss-1overth}).

Note that the standard Gibbs-Thomson formula~(\ref{eq:GT}) is not applicable here since
$\rho_*=0$ (and hence $\rho_0=0$). However, a linear-in-$\kappa$ term in $f_\pm$ is present, giving
rise to a ``generalized'' stiffness $\be$ that is not bound to a reference density $\rho_*$. 

\section{Conclusion}
\label{sec:conclusion}

The Gibbs-Thomson formula and stiffness of a step edge or island boundary were studied
systematically from an atomistic, kinetic perspective. Our starting point was a
kinetic model for out-of-equilibrium processes~\cite{caflischetal99,caflischli03}. The kinetic effects
considered here include diffusion of edge-atoms and convection of kinks along step edges, supplemented
with mean-field algebraic laws that relate mass fluxes to densities. 
Under the assumption that the model reaches a kinetic steady state with straight steps,
the step stiffness is determined
by perturbing this state for small edge curvature and P\'eclet number $P$ with
$|\kappa| < O(P)$, and applying the quasi-steady approximation . A noteworthy result is that for sufficiently
small $\theta$, $O(P^{1/3})<\theta\ll 1$, the step stiffness
behaves as $\tbe = O(1/\theta)$. This behavior is in qualitative agreement with independent calculations based
on equilibrium statistical mechanics~\cite{stasevich06,stasevichetal05}.

Our analysis offers the first derivation of the step stiffness, a near-equilibrium
concept, in the context of nonequilibrium kinetics. The results 
here are thus a 
step towards a better understanding of how evolution out of equilibrium
can be reconciled with concepts of equilibrium thermodynamics for crystal surfaces. Furthermore, this analysis 
provides a linkage of microscopic parameters, e.g. atomistic transition rates and coordination numbers,
to mesoscopic parameters of a BCF-type description. This simpler description is often a more
attractive alternative for numerical simulations of epitaxial growth.\looseness=-1

There are various aspects of the problem that were not addressed in our analysis. For instance,
it remains an open research direction to compare our predictions with results stemming from other kinetic
models~\cite{balykovvoigt05,balykovvoigt06,filimonovhervieu04}.
The existence of a kinetic steady state with straight edges, although expected
intuitively for a class of initial data, should be tested with 
numerical computations. Germane is the assumption of linear-in-$\kappa$
corrections in expansions for the associated densities.
Our perturbation analysis is limited
by the magnitudes of $\kappa$ and $P$; specifically, $|\kappa| < O(P)$. The
formal derivations need to be re-worked for $\kappa > O(P)$ as $P\to 0$.
The kinetic steady state here forms a basis solution for our perturbation theory,
and is different from an equilibrium state.
At equilibrium, detailed balance implies that the fluxes $f_+$, $f_-$
and each of the physical contributions (terms with different coordination
numbers) in~(\ref{eq:w-def})--(\ref{eq:h-def}) for $w$, $g$, and $h$
must vanish identically~\cite{caflischetal99}. An analysis based on this equilibrium approach and
comparisons with the present results are the subjects of work in progress. 
Generally, it also remains a challenge to compare in detail kinetic models such as ours 
with predictions put
forth by Kallunki and Krug with regard to the Einstein relation
for atom migration along a step edge~\cite{kallunkikrug03}. Our underlying step edge model
is based on a simple cubic lattice, and it does not include separate rates for kink or corner rounding.

Lastly, we mention two limitations inherent to our model.
The mean-field laws for the mass fluxes
are probably inadequate in physical situations where atom correlations are crucial.
The study of effects beyond mean field, a compelling but difficult task,
lies beyond our present scope. In the same vein, we expect that the effects of 
elasticity~\cite{connelletal06,kuktabat02,shenoy04}
will in principle modify the mesoscopic kinetic rates (attachment-detachment and permeability coefficients)
and the step stiffness. The inclusion of elastic effects in the kinetic model
and the study of their implications is a viable direction of near-future work.

\section*{Acknowledgments}We thank T.~L. Einstein, J. Krug, M.~S. Siegel, T.~J. Stasevich, 
A. Voigt, and P.~W. Voorhees for useful discussions. One of us (DM) is grateful for the hospitality
extended to him by the
Institute for Pure and Applied Mathematics (IPAM) at the 
University of California, Los Angeles, 
in the Fall 2005, when part of this work was completed.\looseness=-1

\appendix

\section{Step edge coordinates and basic relations}
\label{app:coods}
In this appendix we describe several coordinate systems for an island boundary,
thus supplementing the formulation of section~\ref{subsec:geom}.
Consider step boundaries that stem from perturbing a straight step edge parallel
to the $x$-axis; see Figure~\ref{fig:Fig1}.
Three associated coordinates and generic densities and longitudinal velocities (along the step edge) are defined as follows.
\begin{itemize}
\item {\em Fixed ($x$-) axis:} Variable $x$, velocity $w$, density $\xi$ ($\xi=\phi$ or $k$).
\item {\em Lagrangian:} Variable $\alpha$, velocity $W$, density $\Xi$.
\item {\em Arc length:} Variable $s$, velocity $\widetilde W$, density $\widetilde \Xi$.
\end{itemize}
The vector-valued normal velocity of the boundary is 
$v\, \norm$, where $\norm$ is defined in~(\ref{eq:unit-def}). 

We note the relations
\begin{eqnarray}
s_\alpha &=& \sqrt{x_\alpha^2 + y_\alpha^2}~,\nonumber\\
s_x &=& \sqrt{1+y_x^2} = 1/\cos\theta~,\label{eq:s-der}
\end{eqnarray}
\begin{equation}
x_s^2 + y_s^2 = 1\ \Rightarrow\
x_s x_{ss} + y_s y_{ss} = 0~.\label{eq:x-y-s}
\end{equation}
By use of the Lagrangian coordinate $\alpha$, we denote
\begin{equation}
\dt:=\partial_t|_\alpha,\qquad \partial_t:=\partial_t|_x~,\label{eq:part-t-den}
\end{equation}i.e., $d/dt$ is the time derivative with the spatial variable $\alpha$ held
fixed. Because the arc length, $s$, is only defined up to an arbitrary shift, 
we choose not to use a time derivative with $s$ held fixed; instead, we use $d/dt$ 
in conjunction with the $s$ derivatives.

Thus, the interface velocity $v\norm$ in the different coordinates is given by
\begin{eqnarray}
\partial_t |_\alpha (x,y)&=& v\,\norm = v (y_s, -x_s)~, \nonumber \\
\partial_t |_x (x,y)&=& v \norm + u_1\, \tang = (0,u_2)~,\label{eq:vel-coods}
\end{eqnarray}
in which $\tang$ is defined in~(\ref{eq:unit-def}) and
\begin{equation}
u_1= -vy_s/x_s=-v\tan\theta~,\qquad u_2= -v/x_s=-v/\cos\theta~.\label{eq:u12}
\end{equation}
The tangential derivatives and the time derivatives are related by
\begin{eqnarray} 
\partial _\alpha &=& s_\alpha \partial_s~,\qquad
\partial_x = s_x  \partial_s = (1/\cos\theta)\, \partial_s~,  \nonumber \\
\partial_t |_x&=& \partial_t |_\alpha +  (\partial_t |_x \alpha) \partial_\alpha 
=\partial_t |_\alpha +  (u_1/s_\alpha) \partial_\alpha~.\label{eq:tangt-der} 
\end{eqnarray}

We now use these relations to state transformation rules involving the $(\theta, t)$ variables;
see section~\ref{subsec:noneq-kin} for their applications. We assume that $\theta$ is a monotone function
of the coordinate $x$ and the arc length, $s$.
Useful derivatives in $x$ and $t$ are
\begin{equation}
\partial_x=s_x\,\partial_s=s_x\,\theta_s\,\partial_\theta=-\frac{\kappa}{\cos\theta}\,\partial_\theta,
\quad \partial_x^2=\frac{\kappa}{\cos\theta}\,\partial_\theta\,\frac{\kappa}{\cos\theta}\,\partial_\theta~,
\label{eq:partx}
\end{equation}
\begin{equation}
\partial_t|_x=\dt+(\partial_t|_x \alpha)\partial_\alpha=\dt+(\partial_t|_x\alpha)s_\alpha\,\partial_s=
\dt-v\tan\theta\,\theta_s\partial_\theta=\dt+v\kappa\tan\theta\,\partial_\theta~,
\label{eq:part-t}
\end{equation}
where $s_x=1/\cos\theta$, $\theta_s=-\kappa$, and 
\begin{equation}
\dt=\partial_t|_\alpha=\partial_t|_\theta+(\partial_t|_\alpha\theta)\,\partial_\theta~,
\label{eq:tot-t}
\end{equation}
\begin{equation}
\partial_t|_\alpha\theta=\dt\theta=\kappa\,v_\theta~.
\label{eq:part-ta}
\end{equation}By~(\ref{eq:tot-t}) and~(\ref{eq:part-ta}) we have
\begin{equation}
\dt=\partial_t|_\theta+\kappa\,v_\theta\,\partial_\theta~.
\label{eq:tot-t-fin}
\end{equation}Thus,~(\ref{eq:part-t}) becomes
\begin{equation}
\partial_t|_x=\partial_t|_\theta+\kappa(v\,\tan\theta+v_\theta)\,\partial_\theta~.
\label{eq:part-tx}
\end{equation}In particular,
\begin{equation}
\partial_t|_x\wp(\theta)=\kappa(v\,\tan\theta+v_\theta)\,\partial_\theta\wp~,
\label{eq:part-tx-th}
\end{equation}for any differentiable function $\wp(\theta)$ ($\partial_t|_\theta\wp\equiv 0$).

We close this appendix by deriving relations for the densities and velocities along a step edge in the different coordinates.
If $\xi$, $\Xi$ and $\widetilde\Xi$ denote line densities of the same atom species in $x$, $\alpha$ and $s$, 
we have $\xi\,dx=\Xi\,d\alpha = {\widetilde\Xi}\, ds$. Thus,\looseness=-1
\begin{eqnarray}
\Xi&=&x_\alpha \xi = x_s s_\alpha\,\xi = (\cos \theta) s_\alpha\, \xi~, \nonumber \\
{\widetilde \Xi}&=& x_s  \xi= (\cos \theta)\, \xi =\Xi/s_\alpha~.
\label{eq:dens-coods}
\end{eqnarray}
Next, we derive corresponding relations for the longitudinal velocities $w$, $W$ and $\widetilde W$.
If the position of a moving point is $X(t)$, $S(t)$ or ${\mathcal A} (t)$ in the $x$, $s$ and $\alpha$ coordinates,
respectively, then the velocities in these coordinates are related by
\begin{eqnarray}
w&=& X_t=v/y_s=v/\sin\theta~, \nonumber \\
W&=& {\mathcal A}_t = (w-vy_s)/x_\alpha~,\nonumber\\
{\widetilde W} &=& W s_\alpha~.
\label{eq:vels-coods}
\end{eqnarray} 

\section{Identities for step edge motion}
\label{app:id-edge}
In this appendix we state and prove three propositions pertaining to motion along a step
edge. Some of these results are used in relation to section~\ref{subsec:noneq-kin} and in appendix~\ref{app:eq-mot}
in order to derive alternative equations of motion for edge-atom and kink densities.

\begin{proposition}In the $(x,t)$ variables, the step edge velocity $v$ satisfies
\begin{equation}
\partial_t(\tan\theta)+\partial_x(v/\cos\theta)=0~.\label{eq:cons-v}
\end{equation}
\label{prop:B1}
\end{proposition}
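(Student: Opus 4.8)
The plan is to express the step edge as a graph $y=Y(x,t)$ and to read off the geometry directly. From~(\ref{eq:unit-def}) and~(\ref{eq:edge-repr}) the slope of the edge is $Y_x=\tan\theta$, so the left-hand time derivative in~(\ref{eq:cons-v}) is $\partial_t(\tan\theta)=\partial_t Y_x=\partial_x(\partial_t Y)$, using equality of mixed partials with $x$ held fixed. Thus everything hinges on identifying $\partial_t Y$ (the vertical velocity of the graph at fixed $x$) with $-v/\cos\theta$. First I would invoke~(\ref{eq:vel-coods}): in the $(x,t)$ frame the boundary point moves purely vertically, $\partial_t|_x(x,y)=(0,u_2)$ with $u_2=-v/\cos\theta$ from~(\ref{eq:u12}). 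Since $Y(x,t)$ is precisely the $y$-coordinate of the edge at fixed $x$, we get $\partial_t Y=u_2=-v/\cos\theta$.

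Combining the two observations gives
\begin{equation*}
\partial_t(\tan\theta)=\partial_x(\partial_t Y)=\partial_x(-v/\cos\theta)=-\partial_x(v/\cos\theta),
\end{equation*}
which is exactly~(\ref{eq:cons-v}). An equivalent route, if one prefers to avoid the graph representation, is to note that $v/\cos\theta$ is (up to sign) the longitudinal velocity $w=v/\sin\theta$ rescaled, or more cleanly to derive~(\ref{eq:cons-v}) as the compatibility (curl-free) condition for the velocity field: the identity is simply $\partial_t\partial_x Y=\partial_x\partial_t Y$ written in terms of $\theta$ and $v$. I would present the graph-based argument as the main line since it is the most transparent.

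The only subtlety — and the step I would be most careful about — is the bookkeeping of which variable is held fixed in each partial derivative, since the paper uses $\partial_t|_x$, $\partial_t|_\alpha$, and $\partial_t|_\theta$ interchangeably in nearby formulas. Here all derivatives in~(\ref{eq:cons-v}) are at fixed $x$ for $\partial_t$ and at fixed $t$ for $\partial_x$, so that $\partial_t|_x Y_x=\partial_x(\partial_t|_x Y)$ is genuinely just Clairaut's theorem on $Y(x,t)$, requiring only that $Y$ be $C^2$ — which was assumed in section~\ref{subsec:geom} (``$Y(x,t)$ is a sufficiently differentiable function''). No appeal to the equations of motion or to any constitutive law is needed; the proposition is purely kinematic, a restatement of mass/slope conservation for a graph moving with normal speed $v$.
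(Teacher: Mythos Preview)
Your proof is correct and considerably more direct than the paper's. The paper proceeds by brute-force evaluation in the Lagrangian frame: it writes $\tan\theta=y_\alpha/x_\alpha$, computes $\partial_t|_\alpha\tan\theta$ and $\partial_\alpha\tan\theta$ separately using $\partial_t|_\alpha(x,y)=v(y_s,-x_s)$, converts everything to arc-length derivatives via~(\ref{eq:x-y-s}), and then separately expands $\partial_x(v/\cos\theta)$ in $s$; the cancellation emerges only after combining four intermediate expressions~(\ref{eq:prop1-1})--(\ref{eq:prop1-4}). Your argument bypasses all of this by exploiting the graph representation $y=Y(x,t)$ directly: once one recognizes $\tan\theta=Y_x$ and $\partial_t|_x Y=u_2=-v/\cos\theta$ from~(\ref{eq:u12}), the identity is literally $\partial_t\partial_x Y=\partial_x\partial_t Y$. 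The paper's approach has the virtue of not relying on the edge being a graph over the $x$-axis, so it would survive for curves that double back (relevant, e.g., for closed island boundaries); but within the regime declared in section~\ref{subsec:geom}, where $|\theta|<\pi/4$ and $Y$ is assumed sufficiently smooth, your route is both rigorous and transparently kinematic.
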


\begin{proof}
We proceed by direct evaluation of the derivatives appearing in~(\ref{eq:cons-v}).
First, we calculate the time derivative in terms of $s$ derivatives via the relation
\begin{equation}
\partial_t|_x \tan\theta = \biggl(\partial_t |_\alpha - \frac{vy_s}{s_\alpha x_s}\,\partial_\alpha\biggr) \tan\theta~.
\label{eq:prop1-1}
\end{equation}
By~(\ref{eq:x-y-s}) of appendix~\ref{app:coods}, we evaluate separately each term in the right-hand side:
\begin{eqnarray}
\partial_t |_\alpha \tan\theta &=&\partial_t (y_\alpha/x_\alpha) 
=(y_{\alpha t} x_{\alpha} - x_{\alpha t} y_{\alpha} )/x_{\alpha}^2
\nonumber  \\
&=& [(-vx_\alpha/s_\alpha)_{ \alpha} x_{\alpha} - (vy_\alpha/s_\alpha)_{ \alpha} y_{\alpha} ]/x_{\alpha}^2
=-[(vx_s)_{s} x_{s} + (vy_s)_{ s} y_{s} ]/x_{s}^2
\nonumber  \\
&=&-[v_{s} (x_{s}^2+y_{s}^2) + v (x_{ss}x_s + y_{ss}y_s )]/x_{s}^2
= -v_s/x_s^2
= -v_s/\cos^2 \theta~,\label{eq:prop1-2}
\end{eqnarray}
\begin{eqnarray}
\partial_\alpha \tan\theta
&=& s_\alpha \partial_s (y_s/x_s)
=s_\alpha ( x_s y_{ss} - y_s x_{ss})/x_s^2
\nonumber  \\
&=& s_\alpha [ x_s (-x_s x_{ss}/y_s) - y_s x_{ss}]/x_s^2
= -\frac{s_\alpha  x_{ss}}{y_s x_s^2}=
-\frac{s_\alpha  x_{ss}}{\sin \theta \cos^2 \theta}~.
\label{eq:prop1-3}
\end{eqnarray}

Second, we address the spatial derivative in~(\ref{eq:cons-v}):
\begin{equation}
\partial_x (v/\cos\theta)
= \frac{1}{\cos\theta} \partial_s \frac{v}{\cos\theta}
= \frac{v_s}{\cos^2\theta}+\frac{v}{\cos\theta} \partial_s \frac{1}{x_s}
= \frac{v_s}{\cos^2\theta}-v\frac{x_{ss}}{\cos^3\theta}~.
\label{eq:prop1-4} 
\end{equation}
Equations~(\ref{eq:prop1-1})--(\ref{eq:prop1-4}) combined yield~(\ref{eq:cons-v}),
which completes the proof.
\label{prf:prop-B1}\hfill
\end{proof}

\begin{proposition}
If the line (step-edge) density $\xi(x,t)$ satisfies
\begin{equation}
\partial_t \xi + \partial_x (w\xi) =b~,\label{eq:cons-phi-x}
\end{equation}
in the $(x,t)$ coordinates, then the following relations hold  
in $(\alpha,t)$ and $(s,t)$:
\begin{equation}
(d/dt)\Xi + \partial_\alpha (W\Xi) =B:=b\, x_\alpha~,
\label{eq:prop2-res1}
\end{equation}
\begin{equation}
(d/dt) {\widetilde \Xi} + \partial_s ({\widetilde W}{\widetilde \Xi})+v \kappa {\widetilde \Xi} ={\widetilde B}:=B/ s_\alpha~,
\label{eq:prop2-res2}
\end{equation}where $\partial_t=\partial_t|_x$ and $d/dt=\partial_t|_\alpha$.
\label{prop:B2}
\end{proposition}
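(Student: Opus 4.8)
The plan is to derive the $(\alpha,t)$ form first and then pass to arc length. I would start from the hypothesis~(\ref{eq:cons-phi-x}) in $(x,t)$ coordinates and change variables to the Lagrangian coordinate $\alpha$, using the chain-rule relations collected in appendix~\ref{app:coods} --- specifically $\partial_x = s_x\partial_s = (s_\alpha/s_x)^{-1}\,(\cdot)$, wait, more precisely $\partial_x = (1/x_\alpha)\,\partial_\alpha$ together with $\partial_t|_x = \partial_t|_\alpha + (\partial_t|_x\alpha)\,\partial_\alpha = (d/dt) + (u_1/s_\alpha)\partial_\alpha$ from~(\ref{eq:tangt-der}), and the density relation $\Xi = x_\alpha\xi$ from~(\ref{eq:dens-coods}). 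The first key step is to rewrite the transport term: since $w$ is the velocity in the $x$-frame and $\xi\,dx = \Xi\,d\alpha$, the flux $w\xi$ should be re-expressed through $W$ and $\Xi$ using~(\ref{eq:vels-coods}), namely $W x_\alpha = w - v y_s$. I expect that the material derivative $d/dt$ applied to $\Xi = x_\alpha\xi$ will produce an extra term $\xi\,\partial_t|_\alpha x_\alpha$, and the transport term will contribute a matching term via $\partial_\alpha$ acting on $x_\alpha$; these must cancel or combine cleanly using $\partial_t|_\alpha x_\alpha = \partial_\alpha (x_t|_\alpha) = \partial_\alpha(-v x_s) $ type identities already used in the proof of Proposition~\ref{prop:B1} (cf.~(\ref{eq:prop1-2})).

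\textbf{Second}, once~(\ref{eq:prop2-res1}) is established, I would obtain~(\ref{eq:prop2-res2}) by dividing through by $s_\alpha$ and converting $\partial_\alpha = s_\alpha\partial_s$ and $\widetilde\Xi = \Xi/s_\alpha$, $\widetilde W = W s_\alpha$ from~(\ref{eq:dens-coods}) and~(\ref{eq:vels-coods}). The subtlety here is the stretching factor $s_\alpha$, which is time-dependent: $(d/dt)\Xi = (d/dt)(s_\alpha\widetilde\Xi) = s_\alpha (d/dt)\widetilde\Xi + \widetilde\Xi\,(d/dt)s_\alpha$. The term $(d/dt)s_\alpha$ is exactly what generates the curvature contribution $v\kappa\widetilde\Xi$: one computes $(d/dt)s_\alpha = \partial_t|_\alpha\sqrt{x_\alpha^2+y_\alpha^2}$ and, using $x_t|_\alpha = -v y_s$, $y_t|_\alpha = v x_s$ (wait, signs: from~(\ref{eq:vel-coods}), $\partial_t|_\alpha(x,y) = v(y_s,-x_s)$, so $x_t|_\alpha = v y_s$, $y_t|_\alpha = -v x_s$), differentiating under the square root and simplifying with~(\ref{eq:x-y-s}) yields $(d/dt)s_\alpha = s_\alpha\,\partial_s(\text{tangential velocity}) - v\kappa s_\alpha$ or similar; the $-v\kappa$ piece is the geometric arc-length-stretching rate of a curve moving with normal speed $v$. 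Pushing this through gives the stated $+v\kappa\widetilde\Xi$ term.

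\textbf{The main obstacle} I anticipate is bookkeeping the tangential-velocity terms correctly: there are three velocities ($w$, $W$, $\widetilde W$) tied to three frames, and the ``convective'' corrections from moving between frames (the $u_1 = -v\tan\theta$ drift in~(\ref{eq:u12})) must be tracked so that the advected flux in each frame comes out with the right velocity. The cleanest route is probably to verify~(\ref{eq:prop2-res2}) directly as a statement about a curve evolving under prescribed normal velocity $v\norm$ with a line density $\widetilde\Xi(s,t)$ convected by the tangential velocity $\widetilde W$ relative to material points, invoking the standard transport theorem for evolving curves --- the term $v\kappa\widetilde\Xi$ is then recognized immediately as the contribution of normal motion to the local arc-length element. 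I would then check consistency with the $(\alpha,t)$ form to confirm the conversion factors. The remaining computations are routine applications of~(\ref{eq:s-der})--(\ref{eq:x-y-s}) and~(\ref{eq:vels-coods}), so I would not expand them in detail.
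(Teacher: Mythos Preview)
Your plan is correct and follows essentially the same route as the paper: for~(\ref{eq:prop2-res1}) you change variables directly using $\Xi=x_\alpha\xi$, $W x_\alpha=w-vy_s$, and the time-derivative relation~(\ref{eq:tangt-der}), then check that the extra terms involving $\partial_t|_\alpha x_\alpha$ and the drift $u_1$ cancel---exactly what the paper does by writing the discrepancy as $\mathcal C_1 v+\mathcal C_2 v_\alpha$ and verifying $\mathcal C_1=\mathcal C_2=0$. For~(\ref{eq:prop2-res2}) the paper packages the same computation as an integral identity (integrating $(\partial_t|_\alpha\widetilde\Xi)\,ds$ over an arbitrary arc and invoking $\partial_t|_\alpha s_\alpha=v\kappa s_\alpha$), whereas you propose the equivalent pointwise version via $(d/dt)\Xi=(d/dt)(s_\alpha\widetilde\Xi)=s_\alpha(d/dt)\widetilde\Xi+\widetilde\Xi\,(d/dt)s_\alpha$; the key input $(d/dt)s_\alpha=v\kappa s_\alpha$ is identical, so this is a cosmetic rather than substantive difference.
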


\begin{proof}
We proceed by direct evaluation of the left-hand side of~(\ref{eq:cons-phi-x}) in the $(\alpha,t)$ and $(s,t)$ coordinates.
First, we prove~(\ref{eq:prop2-res1}).
By~(\ref{eq:dens-coods}) we have 
$\xi=\Xi/x_\alpha$ and $w\xi=(Wx_\alpha+vy_s)\Xi/x_\alpha=W\Xi+(vy_\alpha/x_\alpha s_\alpha)\Xi$. 
Thus, the time derivative in~(\ref{eq:cons-phi-x}) becomes
\begin{eqnarray}
\partial_t|_x \xi
&=& \biggl(\partial_t |_\alpha - \frac{v\,y_s}{s_\alpha x_s}\,\partial_\alpha\biggr)\,\frac{\Xi}{x_\alpha}
\nonumber  \\
&=& \frac{\Xi_t}{x_\alpha} - \frac{vy_\alpha}{s_\alpha x_\alpha^2}\,\Xi_\alpha 
+ \Xi\biggl[\partial_t\biggl(\frac{1}{x_\alpha}\biggr)-\frac{vy_s}{s_\alpha x_s}\,\partial_\alpha \biggl(\frac{1}{x_\alpha}\biggr)\biggr]~.
\label{eq:prop2-1}
\end{eqnarray}
Similarly, the spatial derivative in~(\ref{eq:cons-phi-x}) reads 
\begin{eqnarray}
\partial_x (w\xi)
&=& \frac{1}{s_\alpha \cos\theta}\,\partial_\alpha\biggl(W\Xi+\frac{vy_\alpha}{x_\alpha s_\alpha}\Xi \biggr)
\nonumber  \\
&=& \frac{1}{x_\alpha}\,\partial_\alpha(W\Xi)+\frac{vy_\alpha}{x_\alpha^2 s_\alpha}\,\Xi_\alpha 
+\frac{\Xi}{x_\alpha}\,\partial_\alpha\biggl(\frac{vy_\alpha}{x_\alpha s_\alpha}\biggr)~.
\label{eq:prop2-2}
\end{eqnarray}

The combination of~(\ref{eq:prop2-1}) and~(\ref{eq:prop2-2}) yields
\begin{eqnarray}
\lefteqn{\partial_t \xi + \partial_x (w\xi)-(1/x_\alpha)[\partial_t \Xi + \partial_\alpha (W\Xi)]}\nonumber\\
&=& \Xi\,\biggl[\partial_t\biggl(\frac{1}{x_\alpha}\biggr)-\frac{vy_s}{s_\alpha x_s}\,\partial_\alpha\biggl(\frac{1}{x_\alpha}\biggr) 
+\frac{1}{x_\alpha}\,\partial_\alpha\biggl(\frac{vy_\alpha}{x_\alpha s_\alpha}\biggr)\biggr]\nonumber\\
&=:& {\mathcal C}_1\,v +{\mathcal C}_2\, v_\alpha~.
\label{eq:prop2-3}
\end{eqnarray}The last expression is justified by the identity 
\begin{eqnarray}
\partial_t(1/x_\alpha)&=& 
-(x_t)_\alpha /x_\alpha^2 = -(vy_s)_s s_\alpha/x_\alpha^2 = (-v_sy_ss_\alpha - v y_{ss} s_\alpha)/x_\alpha^2
\nonumber  \\
&=& -v_\alpha\,\frac{y_\alpha}{s_\alpha x_\alpha^2} - v\,\frac{y_{ss} s_\alpha}{x_\alpha^2}~.
\label{eq:prop2-4}
\end{eqnarray}
Next, we show that the coefficients ${\mathcal C}_j$ in~(\ref{eq:prop2-3}) vanish identically.
To this end, we convert the related $s$ derivatives to $\alpha$ derivatives via the identities
\begin{equation}
y_{ss}=(y_\alpha/s_\alpha)_\alpha/s_\alpha = y_{\alpha\alpha}/s_\alpha^2 - y_\alpha s_{\alpha\alpha}/s_\alpha^3~,
\label{eq:prop2-5}
\end{equation}
\begin{equation}
\partial_\alpha \biggl(\frac{v y_\alpha }{ s_\alpha x_\alpha}\biggr)
=v_\alpha \frac{ y_\alpha}{s_\alpha x_\alpha} 
+v\frac{y_{\alpha\alpha}x_\alpha s_\alpha - y_\alpha(x_\alpha s_{\alpha\alpha}+s_\alpha x_{\alpha\alpha})  }
{x_\alpha^2 s_\alpha^2}~.
\label{eq:prop2-6}
\end{equation}
It follows by~(\ref{eq:prop2-3}) that ${\mathcal C}_j$ are 
\begin{equation}
{\mathcal C}_2=-\frac{y_\alpha}{s_\alpha x_\alpha^2} + \frac{1}{x_\alpha} \frac{y_\alpha}{s_\alpha x_\alpha}=0~,
\label{eq:prop2-7}
\end{equation}
\begin{equation}
{\mathcal C}_1= -\frac{y_{ss} s_\alpha}{x_\alpha^2} - \frac{y_s}{s_\alpha x_s} \frac{-x_{\alpha\alpha}}{x_\alpha^2}+ \frac{1}{x_\alpha} 
\frac{y_{\alpha\alpha}x_\alpha s_\alpha - y_\alpha(x_\alpha s_{\alpha\alpha}+s_\alpha 
x_{\alpha\alpha})}{x_\alpha^2 s_\alpha^2}= 0~,
\label{eq:prop2-8}
\end{equation}
which in view of~(\ref{eq:cons-phi-x}) and~(\ref{eq:prop2-3}) yield~(\ref{eq:prop2-res1}).

To derive~(\ref{eq:prop2-res2}), we first note that $W\Xi={\widetilde W}\,{\widetilde \Xi}$. 
For arbitrary $\alpha_1$, $\alpha_2$ we consider the integral
\begin{eqnarray}
\int_{\alpha_1}^{\alpha_2}( \partial_t |_\alpha {\widetilde \Xi})\,ds 
&=&\int_{\alpha_1}^{\alpha_2}( \partial_t |_\alpha {\widetilde \Xi})s_\alpha\, d\alpha
=\int_{\alpha_1}^{\alpha_2}[\partial_t |_\alpha ({\widetilde\Xi}s_\alpha) - {\widetilde\Xi}( \partial_t |_\alpha 
s_\alpha)]\, d\alpha\nonumber \\
&=&\int_{\alpha_1}^{\alpha_2}( \partial_t |_\alpha \Xi - {\widetilde \Xi}\,v \kappa s_\alpha)\, d\alpha
=\int_{\alpha_1}^{\alpha_2} [-\partial_\alpha (W\Xi)+B - {\widetilde \Xi}v \kappa s_\alpha]\, d\alpha
\nonumber \\
&=&\int_{\alpha_1}^{\alpha_2}[ -\partial_s ({\widetilde W}{\widetilde \Xi})+{\widetilde B} - {\widetilde \Xi}v \kappa] 
s_\alpha\, d\alpha
=\int_{\alpha_1}^{\alpha_2}[ -\partial_s ({\widetilde W}{\widetilde \Xi})+{\widetilde B} - {\widetilde \Xi}v \kappa]\, ds~,
\label{eq:prop2-9}
\end{eqnarray}where we used $\partial_t|_\alpha s_\alpha=v\kappa s_\alpha$ and~(\ref{eq:prop2-res1}).
Equation~(\ref{eq:prop2-res2}) follows directly, thus concluding the proof.
\label{prf:prop-B2}\hfill
\end{proof}

\begin{proposition}
If the line density $\xi(x,t)$ satisfies
\begin{equation}
\partial_t \xi- \partial_{x} (d\partial_{x}\xi) =b~,
\label{eq:cons-xi-d}
\end{equation}
then the following relations hold: 
\begin{equation}
(d/dt) \Xi - \partial_{x} (D\partial_{x}\xi) + \partial_\alpha (U\Xi) =B:=b x_\alpha~,
\label{eq:prop3-res1}
\end{equation}
\begin{equation}
(d/dt) {\widetilde \Xi} - \partial_s ({\widetilde D}\partial_s{\widetilde \Xi})
+ \partial_s ({\widetilde U}{\widetilde \Xi})+v \kappa\, {\widetilde \Xi} ={\widetilde B}:=b/ s_\alpha~,
\label{eq:prop3-res2}
\end{equation}
where $\partial_t=\partial_t|_x$, $d/dt=\partial_t|_\alpha$, and
\begin{eqnarray}
D&:=&\frac{d}{x_\alpha^2}~,\qquad 
U:=d\,\biggl(\frac{x_{\alpha\alpha}}{x_\alpha^3} - \frac{vy_s}{x_\alpha}\biggr)~,\nonumber\\
{\widetilde D}&:=&\frac{d}{x_s^{2}}~,\qquad
{\widetilde U}:= d\,s_\alpha \biggl(-\frac{s_{\alpha\alpha}}{s_\alpha x_\alpha^2} +\frac{x_{\alpha\alpha}}{x_\alpha^3} 
- \frac{vy_s}{x_\alpha}\biggr)~.
\label{eq:prop3-def}
\end{eqnarray}
\label{prop:B3}
\end{proposition}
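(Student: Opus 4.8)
The plan is to derive (\ref{eq:prop3-res1}) and (\ref{eq:prop3-res2}) by the same direct change of variables used for Propositions~\ref{prop:B1} and~\ref{prop:B2}, the one genuinely new feature being that the Eulerian diffusion operator $\partial_x(d\,\partial_x\cdot)$ splits, when re-expressed in the Lagrangian variable $\alpha$ (or the arc length $s$), into an effective diffusion plus an advective drift, because the map $x\mapsto\alpha$ is $t$-dependent and non-uniform along the edge. Throughout I would use the relations of Appendix~\ref{app:coods}: $\xi=\Xi/x_\alpha=\widetilde\Xi/x_s$ from (\ref{eq:dens-coods}); the operator identities $x_\alpha\partial_x=\partial_\alpha=s_\alpha\partial_s$ together with the material time-derivative rule $\partial_t|_x=\partial_t|_\alpha-(vy_s/x_\alpha)\partial_\alpha$ from (\ref{eq:u12})--(\ref{eq:tangt-der}); the kinematic identity $\partial_t|_\alpha x_\alpha=\partial_\alpha(x_t|_\alpha)=\partial_\alpha(vy_s)$; the geometric identities $x_s^2+y_s^2=1$ and $x_sx_{ss}+y_sy_{ss}=0$ from (\ref{eq:x-y-s}); and $\partial_t|_\alpha s_\alpha=v\kappa s_\alpha$.

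First I would prove (\ref{eq:prop3-res1}) by multiplying (\ref{eq:cons-xi-d}) through by $x_\alpha$. The time-derivative term is handled exactly as in the proof of Proposition~\ref{prop:B2}: writing $\xi=\Xi/x_\alpha$ and applying the rule above together with $\partial_t|_\alpha x_\alpha=\partial_\alpha(vy_s)$, the two terms proportional to $\xi\,\partial_\alpha(vy_s)$ cancel and one is left with $x_\alpha\,\partial_t|_x\xi=(d/dt)\Xi-\partial_\alpha(vy_s\,\xi)$. For the diffusion term I would use $x_\alpha\partial_x=\partial_\alpha$ to write $x_\alpha\,\partial_x(d\,\partial_x\xi)=\partial_\alpha(d\,\partial_x\xi)$ and then expand $d\,\partial_x\xi=d(1/x_\alpha)\partial_\alpha(\Xi/x_\alpha)=D\,\partial_\alpha\Xi-(d\,x_{\alpha\alpha}/x_\alpha^3)\,\Xi$ with $D=d/x_\alpha^2$; adding the leftover $\partial_\alpha(vy_s\,\xi)=\partial_\alpha\big((vy_s/x_\alpha)\Xi\big)$ from the time derivative collects all non-diffusive pieces into a single conservative drift $\partial_\alpha(U\Xi)$ with $U$ as in (\ref{eq:prop3-def}), leaving the effective diffusion term of (\ref{eq:prop3-res1}). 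This yields (\ref{eq:prop3-res1}).

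Then (\ref{eq:prop3-res2}) follows from (\ref{eq:prop3-res1}) by the integral argument used for (\ref{eq:prop2-res2}): for arbitrary $\alpha_1,\alpha_2$ write $\int_{\alpha_1}^{\alpha_2}(\partial_t|_\alpha\widetilde\Xi)\,ds=\int_{\alpha_1}^{\alpha_2}\big[\partial_t|_\alpha(\widetilde\Xi s_\alpha)-\widetilde\Xi\,\partial_t|_\alpha s_\alpha\big]\,d\alpha$, use $\widetilde\Xi s_\alpha=\Xi$ and $\partial_t|_\alpha s_\alpha=v\kappa s_\alpha$ to produce the curvature term $v\kappa\widetilde\Xi$, insert (\ref{eq:prop3-res1}) for $\partial_t|_\alpha\Xi$, and rewrite the $\alpha$-derivatives as $s$-derivatives via $\partial_\alpha=s_\alpha\partial_s$. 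In that last conversion one factor $1/x_\alpha$ is turned into $1/x_s$ while the remaining rescaling by $s_\alpha$ generates the $-s_{\alpha\alpha}/(s_\alpha x_\alpha^2)$ contribution, so the diffusion coefficient becomes $\widetilde D=d/x_s^2$ and the drift becomes $\widetilde U$, both as in (\ref{eq:prop3-def}); the remaining advective and source terms transform exactly as in Proposition~\ref{prop:B2}. Since $\alpha_1,\alpha_2$ are arbitrary, the integrands agree, giving (\ref{eq:prop3-res2}).

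The one real obstacle is the bookkeeping of the diffusion term: $\partial_x$ does not commute with multiplication by functions of $(\alpha,t)$, so the naive substitution spawns several extra terms, and obtaining $U$ and $\widetilde U$ in precisely the stated form requires grouping the $vy_s/x_\alpha$ contribution coming from the time derivative together with the $x_{\alpha\alpha}/x_\alpha^3$ (and, for arc length, $s_{\alpha\alpha}$) contributions coming from differentiating $1/x_\alpha$ (and $1/s_\alpha$). As in Proposition~\ref{prop:B2}, the identities $x_s^2+y_s^2=1$, $x_sx_{ss}+y_sy_{ss}=0$ and $\partial_t|_\alpha s_\alpha=v\kappa s_\alpha$ are what drive the remaining cancellations.
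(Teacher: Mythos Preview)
Your direct change-of-variables argument is correct and would yield the result, but the paper takes a shorter and more conceptual route that you may find instructive. Rather than re-deriving the Lagrangian form of the time derivative and then separately transforming the diffusion operator, the paper observes that the diffusion equation $\partial_t\xi-\partial_x(d\,\partial_x\xi)=b$ is a special instance of the convection equation $\partial_t\xi+\partial_x(w\xi)=b$ with the density-dependent choice $w=-d\,\xi_x/\xi$, so that $w\xi=-d\,\partial_x\xi$. With this identification, Proposition~\ref{prop:B2} applies as a black box, and the only remaining work is to evaluate the Lagrangian flux $W\Xi=\xi w-\xi v y_s=-d\,\partial_x\xi-\xi v y_s$ and rewrite it in $\alpha$ (and then in $s$) to read off the effective diffusion $D\partial_\alpha\Xi$ and drift $U\Xi$; the computation is three or four lines. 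Your approach does essentially the same algebra but re-establishes the convection-to-Lagrangian passage from scratch; the paper's trick factors that step through the already-proved proposition, which is tidier and makes the splitting into ``diffusion plus drift'' appear automatically rather than by hand-collection of terms.
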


\begin{proof}
Equations~(\ref{eq:prop3-res1}) and~(\ref{eq:prop3-res2}) follow directly from Proposition~\ref{prop:B2} by setting $w=-\xi_x/\xi$. 
Indeed, with this substitution we have
\begin{eqnarray}
\Xi W&=&x_\alpha \xi\,\frac{w-vy_s}{x_\alpha}= \xi w-\xi vy_s = -\partial_x f - fvy_s
\nonumber \\
&=& -\partial_x\biggl(\frac{\Xi}{x_\alpha}\biggr) - \frac{\Xi}{x_\alpha}\,vy_s
= -\frac{1}{x_\alpha}\,\partial_\alpha\biggl(\frac{\Xi}{x_\alpha}\biggr) - \frac{\Xi}{x_\alpha}\,vy_s
\nonumber \\
&=& -\frac{1}{x_\alpha^2}\,\partial_\alpha \Xi + \Xi\,
\biggl(\frac{x_{\alpha\alpha}}{x_\alpha^3}-\frac{vy_s}{x_\alpha} \biggr)
\label{eq:prop3-1} \\
&=& -\frac{1}{x_\alpha^2}\,s_\alpha\partial_s (s_\alpha {\widetilde \Xi}) + s_\alpha\, {\widetilde \Xi}\,
\biggl(\frac{x_{\alpha\alpha}}{x_\alpha^3}-\frac{vy_s}{x_\alpha} \biggr)
\nonumber \\
&=&-\frac{s_\alpha^2}{x_\alpha^2}\,\partial_s  {\widetilde \Xi} 
+  {\widetilde \Xi}\,s_\alpha
\biggl(-\frac{s_{\alpha\alpha}}{s_\alpha x_\alpha^3} +\frac{x_{\alpha\alpha}}{x_\alpha^3}-\frac{vy_s}{x_\alpha} \biggr)~,
\label{eq:prop3-2}
\end{eqnarray}
where $\partial_s s_\alpha = s_{\alpha\alpha} / s_\alpha$ was used in the last line. 
Equation~(\ref{eq:prop3-res1}) 
comes from Proposition~\ref{prop:B2} in view of~(\ref{eq:prop3-1}).
Equation~(\ref{eq:prop3-res2}) stems from Proposition~\ref{prop:B2} via~(\ref{eq:prop3-2}),
which completes the proof.
\label{prf:prop-B3}\hfill
\end{proof}

\section{Edge-atom and kink motion in $s$ coordinate}
\label{app:eq-mot}
Next, we apply the results of appendix~\ref{app:id-edge} to transform evolution
laws~(\ref{eq:phi-pde}) and~(\ref{eq:k-pde}) for $\phi$ and $k$ to the $(s,t)$ coordinates.

By use of Proposition~\ref{prop:B2}, the convection equation~(\ref{eq:k-pde}) becomes
\begin{equation}
(d/dt)\widetilde K+\partial_s[\widetilde W(\widetilde K_r-\widetilde K_l)]+v\kappa\widetilde K=2\widetilde F_k~,
\label{eq:k-pde-s}
\end{equation}
where
\begin{equation}
\widetilde F_k:=(g-h)x_s=(g-h)\cos\theta~,\quad \widetilde W:=\frac{w-vy_s}{x_s}=\frac{w-v\sin\theta}{\cos\theta}~.
\label{eq:Fk-def}
\end{equation}
In the above, $\widetilde K=k\cos\theta$, $\widetilde K_r=k_r\cos\theta$, 
$\widetilde K_l=k_l\cos\theta$ and $\widetilde W$ are
defined along the edge arc length ($s$) according to the notation of appendix~\ref{app:coods}.

Proposition~\ref{prop:B3} of appendix~\ref{app:id-edge} converts the diffusion
equation~(\ref{eq:phi-pde}) for $\phi$ to
\begin{equation}
(d/dt)\widetilde\Phi-\partial_s(\widetilde D_E\partial_s\widetilde\Phi)+\partial_s(\widetilde U\widetilde\Phi)
+v\kappa\widetilde\Phi=\widetilde F_\phi~,
\label{eq:phi-pde-s}
\end{equation}
where
\begin{equation}
\widetilde F_\phi:=\fp+\fm-f_0\cos\theta~,\quad \widetilde D_E:=\frac{\De}{x_s^2}=\frac{\De}{\cos^2\theta}~,
\label{eq:Fphi-def}
\end{equation}
\begin{equation}
\widetilde U:=\De\,s_\alpha\biggl(-\frac{s_{\alpha\alpha}}{s_\alpha x_\alpha^2}+
\frac{x_{\alpha\alpha}}{x_\alpha^3}-\frac{vy_s}{x_\alpha}\biggr)~,
\label{eq:U-def}
\end{equation}
and $\alpha$ is the Lagrangian step coordinate; see appendix~\ref{app:coods}.
Here, $\widetilde \Phi$ is the edge-atom density defined along the edge arc length.
Note that the transformed equation~(\ref{eq:phi-pde-s}) contains a drift term, which
is absent in~(\ref{eq:phi-pde}) if $x$ is simply replaced by $s$.

\section{Step edge velocity}
\label{app:vel}
In this appendix we derive~(\ref{eq:v-def}) in the form of a proposition; cf. equation~(2.12) in~\cite{caflischli03}.

\begin{proposition}
The net flux $f_0$ of terrace and edge-atoms to kinks is 
\begin{equation}
f_0=\frac{v}{\cos\theta}(1+\kappa\phi\cos\theta)~.
\label{eq:f0-prop}
\end{equation}
\label{prop:D1}
\end{proposition}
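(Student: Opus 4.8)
The plan is to establish (\ref{eq:f0-prop}) as a local mass-balance, or kinematic, identity at the step edge, in the spirit of equation~(2.12) in~\cite{caflischli03} and using the coordinate relations of Appendix~\ref{app:coods}. Recall that $f_0=wk+2g+h$ of~(\ref{eq:f0}) is, by construction, the net rate per unit length along the fixed $x$-axis at which atoms are taken from the terrace-adatom and edge-atom populations and attached to kinks; as the kinks propagate, nucleate, and annihilate, these atoms are incorporated into the crystal, which is what advances the step with normal velocity $v$.

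First I would relate the advance of the smoothed step curve $y=Y(x,t)$ to atom counting: in the nondimensionalization of Section~\ref{subsec:geom}, where the area per atom equals $1$, the solid region swept out per unit time by an edge segment has area $v/\cos\theta$ per unit $x$-length --- the normal speed $v$ times the arc length $1/\cos\theta$ carried over a unit $x$-interval. Second --- and this is the step that produces the curvature factor --- I would account for the edge-atoms that ride along with the moving edge. The edge carries $\phi$ edge-atoms per unit $x$-length, equivalently $\phi\cos\theta$ per unit arc length; since a material arc element of the edge stretches at relative rate $v\kappa$ (the identity $\partial_t|_\alpha s_\alpha=v\kappa s_\alpha$ used in the proof of Proposition~\ref{prop:B2}), keeping the lengthening edge populated --- or absorbing atoms from a shortening one, when $\kappa<0$ --- changes the edge-atom content at rate $v\kappa\phi$ per unit $x$-length and time. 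This term can also be viewed as the discrepancy, proportional to curvature times edge-atom density, between the number of top-layer atoms near the edge and the area enclosed by the smoothed curve. Balancing the atoms delivered to the step structure via the kink processes ($f_0$ per unit $x$-length) against their two fates --- incorporation into the bulk solid, which grows at $v/\cos\theta$ per unit $x$-length, and the geometric change $v\kappa\phi$ in the edge-atom population --- yields $f_0=v/\cos\theta+v\kappa\phi=(v/\cos\theta)(1+\kappa\phi\cos\theta)$, which is~(\ref{eq:f0-prop}); solving for $v$ recovers~(\ref{eq:v-def}).

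As consistency checks I would verify the limiting cases: for $\theta=0$ and $\kappa=0$ the identity reduces to $f_0=v$ (atom-incorporation rate equals area-growth rate); for $\theta\neq0$ and $\kappa=0$ it reduces to $f_0=v/\cos\theta$; and the curvature term is consistent, after integration around a closed island boundary, with the global area balance. I expect the main obstacle to be making the second step rigorous: cleanly separating the genuinely geometric (curvature) contribution from the atom fluxes already encoded in~(\ref{eq:phi-pde}) and~(\ref{eq:k-pde}), and keeping the conversions between the fixed-axis, arc-length, and Lagrangian descriptions consistent --- densities and velocities per unit $x$ versus per unit arc length, and $\partial_t|_x$ versus $\partial_t|_\alpha$ --- so that precisely the factor $1+\kappa\phi\cos\theta$ appears and no spurious $v_\theta$- or $D_E$-dependent terms survive.
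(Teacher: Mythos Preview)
Your approach is genuinely different from the paper's. The paper does \emph{not} argue locally: it begins from a postulated global mass balance for terrace adatoms,
\[
-\dt\int\rho\,dA=\int_{\Gamma}v\,ds+\dt\int_\Gamma \widetilde\Phi\,ds-FA~,
\]
then replaces the left side by the integrated terrace diffusion equation and expands $\dt\int_\Gamma\widetilde\Phi\,ds$ using $\partial_t|_\alpha s_\alpha=v\kappa s_\alpha$ together with the edge-atom law~(\ref{eq:phi-pde-s}). After an integration by parts on the advective piece $(\partial_t|_\alpha s)\,\partial_s\widetilde\Phi$, the $f_\pm$ terms cancel and one is left with the pointwise identity $-v=-f_0\cos\theta+\widetilde\Phi\,\kappa v$, which is~(\ref{eq:f0-prop}). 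So in the paper the curvature term is not argued kinematically at all; it is the residue of an integration by parts and of plugging in the edge-atom PDE. Your route, by contrast, tries to read off the identity directly from local atom accounting plus the stretching identity.

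The arc-length-stretching idea you invoke is exactly the right ingredient, and your computation $v\kappa\cdot\phi\cos\theta\cdot s_x=v\kappa\phi$ is correct. The weak point is the ``two fates'' narrative: atoms delivered by $f_0$ go to kinks and thereby to the solid; they do not split off to replenish the edge-atom population. What actually happens in the paper's balance is that the \emph{total} edge-atom count changes both by fluxes (encoded in~(\ref{eq:phi-pde-s})) and by arc stretching; when the flux part is eliminated via the edge-atom equation, only the geometric piece $\int\widetilde\Phi\,\kappa v\,ds$ survives and combines with $\int v\,ds$ and $\int f_0\cos\theta\,ds$. If you want a purely local version, the clean way is to write~(\ref{eq:phi-pde-s}) and the global postulate in the same $(s,t)$ variables and subtract pointwise, rather than to assign a physical destination to a fraction of $f_0$. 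Your anticipated obstacle---keeping the coordinate bookkeeping straight so that no spurious $v_\theta$ or $D_E$ terms survive---is precisely what the paper handles by the integration-by-parts step, which kills the $\partial_s(\widetilde D_E\partial_s\widetilde\Phi)$ and $\partial_s(\widetilde U\widetilde\Phi)$ contributions on a closed boundary.
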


\begin{proof}
We apply mass conservation, revisiting the derivation in~\cite{caflischetal99}.
The starting point is the change of the total number of adatoms on a terrace,
which is balanced by: (i) the step edge motion, (ii) the change 
of the number of edge-atoms, and (iii) the flux rate of deposited atoms. Hence,
\begin{equation}
-\dt\int\rho\,dA=\int_{\Gamma}v\,ds+\dt\int_\Gamma \widetilde\Phi\,ds-FA~,
\label{eq:drho}
\end{equation}
where $A$ is the area of a single terrace and $\Gamma$ is the step boundary.

Next, we find alternative expressions for the terms $\dt\int \rho\,dA$ and $\dt\int\widetilde\Phi\,ds$. 
First, integration of the diffusion equation~(\ref{eq:rho-pde}) for $\rho$ yields
\begin{equation}
\dt\int \rho\,dA=-\int_\Gamma (\fp+\fm)\,ds+FA~.
\label{eq:drho-2}
\end{equation}
Second, direct differentiation of $\int \widetilde\Phi\,ds$ with
respect to time gives
\begin{equation}
\dt\int_\Gamma \widetilde\Phi\,ds=\int_\Gamma (\partial_t|_s\widetilde\Phi+\kappa v\widetilde\Phi)\,ds~,
\label{eq:dphi}
\end{equation}
by using
\begin{equation}
\dt ds=\partial_t|_\alpha s_\alpha\,d\alpha=(\partial_t|_\alpha s_\alpha)\,d\alpha=(\kappa v s_\alpha)d\alpha=\kappa v\, ds~.
\label{eq:dsa}
\end{equation}
By combination of~(\ref{eq:drho})--(\ref{eq:dphi}) we obtain
\begin{eqnarray}
-\int_\Gamma v\,ds&=&-\int_\Gamma (\fp+\fm)\,ds+
\int_\Gamma \biggl[\dt\widetilde\Phi-(\partial_t|_\alpha s)\partial_s\widetilde\Phi+\kappa v\widetilde\Phi\biggr]
\nonumber\\
&=& -\int_\Gamma (\fp+\fm)\,ds+\int_\Gamma [\widetilde F_\phi-(\partial_t|_\alpha s)\partial_s\widetilde\Phi]\,ds~,\label{eq:dv}
\end{eqnarray}
where we invoked the evolution equation~(\ref{eq:phi-pde-s}) for $\widetilde\Phi$ in ($s,t$) coordinates and 
definition~(\ref{eq:Fphi-def}) from appendix~\ref{app:eq-mot}. 
Thus, via integration by parts,~(\ref{eq:dv}) becomes
\begin{eqnarray}
-\int_\Gamma v\,ds&=&-\int_\Gamma f_0\,x_s\,ds+\int_\Gamma \widetilde\Phi\,\partial_s(\partial_t|_\alpha s)\,ds\nonumber\\
&=&-\int_\Gamma f_0\,x_s\,ds+\int_\Gamma \widetilde\Phi\ s_\alpha^{-1}\partial_\alpha(\partial_t|_\alpha s)\ ds\nonumber\\
&=& -\int_\Gamma f_0\,x_s\ ds+\int_\Gamma \widetilde\Phi\kappa\,v\ ds~.
\label{eq:dv-2}
\end{eqnarray}
Hence, we have
\begin{equation}
-v=-f_0\,x_s+\widetilde\Phi\kappa\,v\qquad (\widetilde\Phi=\phi\, x_s=\phi\,\cos\theta)~,
\label{eq:v-k-phi}
\end{equation}which is identified with~(\ref{eq:v-def}) and, thus, concludes the proof.
\label{prf:prop-D1}\hfill
\end{proof}

\section{First-order perturbation theory}
\label{app:lin-pert}
In this appendix we describe in the form of a proposition the basic linear perturbation for the 
equations of motion along a step edge. This theory is used in section~\ref{sec:GT-stiff}.

\begin{proposition}
Let $\phi$ and $k$ be functions of $(\theta, t)$ that satisfy 
\begin{equation}
M_j(\phi, k, \kappa \phi_\theta, \kappa k_\theta)=0\qquad j=1,\,2~,
\label{eq:Mj}
\end{equation}where $M_j(\phi, k, \eta, \zeta)$ are differentiable. If~(\ref{eq:phk-app}) holds,
where $\phi^{(0)}$ and $k^{(0)}$ solve
\begin{equation}
M_j(\phi^{(0)}, k^{(0)}, 0, 0) = 0~,\label{eq:zeroth}
\end{equation}
then $\phi^{(1)}$ and $k^{(1)}$ are
\begin{equation}
\phi^{(1)}=\frac{\mathcal D^\phi}{\mathcal D}~,\qquad k^{(1)}=\frac{\mathcal D^k}{\mathcal D}~,
\label{eq:phk-1-app}
\end{equation}
where
\begin{equation}
\mathcal D=\left|\begin{array}{lr}\partial_\phi M_{1}& \partial_k M_{1}\\
                                  \partial_\phi M_{2}& \partial_k M_{2}\end{array}\right|~,
\label{eq:det-def}
\end{equation}
\begin{equation}
\mathcal D^\phi=\left|\begin{array}{lr}-(\partial_\zeta M_{1})\,\partial_\theta\phi_{0}-(\partial_\eta M_{1})\,\partial_\theta k_{0}& \ \partial_k M_{1}\\
                                  -(\partial_\zeta M_{2})\,\partial_\theta\phi_{0}-(\partial_\eta M_{2})\,\partial_\theta k_{0}& \ \partial_k M_{2}\end{array}\right|~,
\label{eq:detphi-def}
\end{equation}
\begin{equation}
\mathcal D^k=\left|\begin{array}{lr}\partial_\phi M_1& \ -(\partial_\zeta M_{1})\,\partial_\theta\phi_{0}-(\partial_\eta M_{1})\,\partial_\theta k_{0}\\
                                  \partial_\phi M_2& \ -(\partial_\zeta M_{2})\,\partial_\theta\phi_{0}-(\partial_\eta M_{2})\,\partial_\theta k_{0}\end{array}\right|~,
\label{eq:detk-def}
\end{equation}         
and the derivatives of $M_j(\phi, k, \eta, \zeta)$ are evaluated at $(\phi^{(0)}, k^{(0)}, 0, 0)$. 
\label{prop:E1}
\end{proposition}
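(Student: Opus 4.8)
The plan is to treat this as a routine implicit-function-theorem style perturbation argument, exploiting that equations~(\ref{eq:Mj}) become algebraic in $\kappa$ once the ansatz~(\ref{eq:phk-app}) is inserted. First I would substitute $\phi\sim\phi^{(0)}+\kappa\phi^{(1)}$ and $k\sim k^{(0)}+\kappa k^{(1)}$ into $M_j(\phi,k,\kappa\phi_\theta,\kappa k_\theta)=0$. Note that the third and fourth slots of $M_j$ already carry an explicit factor of $\kappa$, so to first order in $\kappa$ we may evaluate $\phi_\theta$ and $k_\theta$ there at the zeroth-order values $\phi^{(0)}_\theta$, $k^{(0)}_\theta$; any correction is $O(\kappa^2)$. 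Then a Taylor expansion of $M_j$ about the base point $(\phi^{(0)},k^{(0)},0,0)$ gives, at order $\kappa^1$,
\begin{equation}
(\partial_\phi M_j)\,\phi^{(1)}+(\partial_k M_j)\,k^{(1)}+(\partial_\eta M_j)\,\phi^{(0)}_\theta+(\partial_\zeta M_j)\,k^{(0)}_\theta=0,\qquad j=1,2,
\label{eq:perturb-linear}
\end{equation}
where all derivatives of $M_j$ are evaluated at $(\phi^{(0)},k^{(0)},0,0)$, and the order-$\kappa^0$ terms vanish by~(\ref{eq:zeroth}).

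Next I would read~(\ref{eq:perturb-linear}) as a $2\times2$ linear system for the unknowns $(\phi^{(1)},k^{(1)})$ with coefficient matrix $\bigl(\begin{smallmatrix}\partial_\phi M_1&\partial_k M_1\\ \partial_\phi M_2&\partial_k M_2\end{smallmatrix}\bigr)$, whose determinant is $\mathcal D$ as defined in~(\ref{eq:det-def}), and right-hand side vector with entries $-(\partial_\eta M_j)\phi^{(0)}_\theta-(\partial_\zeta M_j)k^{(0)}_\theta$. Wait — I should double-check the labelling of the $\eta$ versus $\zeta$ slots against the displayed determinants~(\ref{eq:detphi-def})--(\ref{eq:detk-def}): there the inhomogeneous column reads $-(\partial_\zeta M_j)\partial_\theta\phi_0-(\partial_\eta M_j)\partial_\theta k_0$, so the convention is that $\eta$ is the slot paired with $\kappa k_\theta$ and $\zeta$ the slot paired with $\kappa\phi_\theta$; with that identification~(\ref{eq:perturb-linear}) matches the stated right-hand side exactly. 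Applying Cramer's rule then yields $\phi^{(1)}=\mathcal D^\phi/\mathcal D$ and $k^{(1)}=\mathcal D^k/\mathcal D$ with $\mathcal D^\phi$, $\mathcal D^k$ the determinants obtained by replacing the first (resp. second) column of the coefficient matrix by the inhomogeneous column — precisely~(\ref{eq:detphi-def}) and~(\ref{eq:detk-def}). This gives~(\ref{eq:phk-1-app}) and completes the proof.

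The only real subtlety — and the step I would flag as the main obstacle — is justifying that the $O(\kappa^2)$ remainder is genuinely negligible and that division by $\mathcal D$ is legitimate, i.e. that $\mathcal D\neq0$ at the relevant $(\phi^{(0)},k^{(0)})$. The non-vanishing of $\mathcal D$ is really a hypothesis one needs (it is the nondegeneracy condition of the implicit function theorem applied to the map $(\phi,k)\mapsto(M_1,M_2)$ at $\kappa=0$); in the applications of section~\ref{sec:GT-stiff} this is checked directly via the explicit asymptotic formulas~(\ref{eq:D-determ-app1}) and~(\ref{eq:D-rg2}) for $P\ll1$, where $\mathcal D$ is shown to be of definite order in $P$. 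For the proposition as stated, I would simply take $\mathcal D\neq0$ as an implicit regularity assumption (the formula~(\ref{eq:phk-1-app}) is vacuous otherwise), and note that the remainder estimate follows from differentiability of the $M_j$ together with the assumed Taylor-series validity of~(\ref{eq:phk-app}), which the paper has already posited on physical grounds. No further ingredients are needed; the argument is elementary once the bookkeeping of the four arguments of $M_j$ is set up correctly.
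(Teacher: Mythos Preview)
Your proposal is correct and follows essentially the same approach as the paper: Taylor-expand $M_j$ about $(\phi^{(0)},k^{(0)},0,0)$, use~(\ref{eq:zeroth}) to kill the zeroth-order terms, and solve the resulting $2\times2$ linear system for $(\phi^{(1)},k^{(1)})$ by Cramer's rule. Your observation about the $\eta$/$\zeta$ labeling matches the paper's own Taylor expansion~(\ref{eq:taylorMj}), and your remark that $\mathcal D\neq0$ is an implicit hypothesis (verified later in the explicit asymptotics) is a reasonable clarification that the paper leaves tacit.
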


\begin{proof}
Equations~(\ref{eq:phk-1-app})--(\ref{eq:detk-def}) follow directly from the 
Taylor expansion of formula~(\ref{eq:Mj}) at $(\phi^{(0)}, k^{(0)}, 0, 0)$,
\begin{equation}
0 = (\kappa\phi^{(1)})\,\partial_\phi M_j+(\kappa k^{(1)})\,\partial_k M_j+
(\kappa\partial_\theta\phi^{(0)})\partial_\zeta M_j+(\kappa \partial_\theta k^{(0)})\partial_\eta M_j~,
\label{eq:taylorMj}
\end{equation}
where use was made of~(\ref{eq:zeroth}). The $2\times 2$ linear system for $(\phi^{(1)}, k^{(1)})$
leads to~(\ref{eq:phk-1-app}).
\label{prf:prop-E1}\hfill
\end{proof}


\begin{thebibliography}{50}

\bibitem{akutsu86} {\sc Y. Akutsu and N. Akutsu}, {\em Relationship between the anisotropic interface
tension, the scaled interface width and the equilibrium shape in two dimensions},
J.\ Phys.\ A: Math.\ Gen., 19 (1986), pp.\ 2813--2820.

\bibitem{balykovvoigt05} {\sc L. Balykov and A. Voigt}, {\em Kinetic model for step flow of [100] steps},
Phys.\ Rev.\ E, 72 (2005), 022601.

\bibitem{balykovvoigt06} {\sc L. Balykov and A. Voigt}, {\em A 2+1-dimensional
terrace-step-kink model for epitaxial growth far from equilibrium},
Multisc.\ Model.\ Simul., 5 (2006), pp.\ 45--61.

\bibitem{banchetal05} {\sc E. Banch, F. Hausser, and A. Voigt},
{\em Finite element method for epitaxial growth with thermodynamic boundary conditions},
SIAM J.\ Sci.\ Comp., 26 (2005), pp.\ 2029--2046.

\bibitem{biskupetal04} {\sc M. Biskup, L. Chayes, and R. Kotecky},
{\em A proof of the Gibbs-Thomson formula in the droplet formation regime},
J.\ Stat.\ Phys., 116 (2004), pp.\ 175--203.

\bibitem{bcf51} {\sc W.~K. Burton, N. Cabrera, and F.~C. Frank}, {\em The growth
of crystals and the equilibrium structure of their surfaces},
Philos.\ Trans.\ R.\ Soc.\ London Ser.\ A, 243 (1951), pp.\ 299--358.

\bibitem{caflischetal99} {\sc R.~E. Caflisch, W. E, M.~F. Gyure, B. Merriman, and C. Ratch},
{\em Kinetic model for a step edge in epitaxial growth}, Phys.\ Rev.\ E, 59 (1999), pp.\ 6879--6887.

\bibitem{caflischli03} {\sc R.~E. Caflisch and B. Li}, 
{\em Analysis of island dynamics in epitaxial growth of thin films}, 
Multiscale Model.\ Simul., 1 (2003), pp.\ 150--171.

\bibitem{connelletal06} {\sc C.~R. Connell, R.~E. Caflisch, E. Luo, and
G. Simms}, {\em The elastic field of a surface step: The Marchenko-Parshin formula
in the linear case}, J.\ Comp.\ Appl.\ Math., 196 (2006), pp.\ 368--386.

\bibitem{deGennes68} {\sc P.-G. de Gennes}, 
{\em Soluble model for fibrous structures with steric constraints},
J.\ Chem.\ Phys., 48 (1968), pp.\ 2257--2259.

\bibitem{ehrlichhudda66} {\sc G. Ehrlich and F. Hudda}, {\em Atomic view of surface diffusion:
Tungsten on tungsten}, J.\ Chem.\ Phys., 44 (1966), pp.\ 1039--1099.

\bibitem{einstein03} {\sc T.~L. Einstein}, 
{\em Applications of ideas from random matrix theory to step distributions
on ``misoriented" surfaces}, Ann.\ Henri Poincar\'e, 4 (2003), pp.\ S811--S824.

\bibitem{evansetal06} {\sc J.~W. Evans, P.~A. Thiel, and M.~C. Bartelt}, 
{\em Morphological evolution during epitaxial thin film growth: Formation of 2D islands and 3D mounds},
Surf.\ Sci.\ Reports, 61 (2006), pp.\ 1--128.

\bibitem{filimonovhervieu04} {\sc S.~N. Filimonov and Yu.~Yu. Hervieu}, {\em Terrace-edge-kink model of
atomic processes at the permeable steps}, Surf.\ Sci., 553 (2004), pp.\ 133--144.

\bibitem{fisher84} {\sc M.~E. Fisher}, {\em Walks, walls, wetting, and melting},
J.\ Stat.\ Phys., 34 (1984), pp.\ 667--729.

\bibitem{fisher82} {\sc M.~E. Fisher and D.~S. Fisher}, {\em Wall wandering and the dimensionality
dependence of the commensurate-incommensurate transition}, Phys.\ Rev.\ B, 25 (1982), pp.\ 3192--3198.

\bibitem{gibbs28} {\sc J.~W. Gibbs}, {\em On the equilibrium of heterogeneous substances} (1876), in
Collected Works, Vol.\ 1, Longmans, Green and Co., New York, 1928.

\bibitem{gurtin93} {\sc M.~E. Gurtin}, {\em Thermomechanics of Evolving Phase Boundaries
in the Plane}, Clarendon Press, Oxford, 1993.

\bibitem{hausservoigt05} {\sc F. Hausser and A. Voigt}, {\em A discrete scheme for regularized anisotropic
surface diffusion: a 6th order geometric evolution equation},
Interfaces Free Bound., 7 (2005), pp.\ 353--369.

\bibitem{herring51} {\sc C. Herring}, {\em Some theorems on the free energies of crystal
surfaces}, Phys.\ Rev.\ 82 (1951), pp.\ 87--93.

\bibitem{israelikandel99} {\sc N. Israeli and D. Kandel},
{\em Profile of a decaying crystalline cone}, Phys.\ Rev.\ B, 60 (1999), pp.\ 5946--5962.

\bibitem{jeongwilliams99} {\sc H.-C. Jeong and E.~D. Williams}, {\em Steps on surfaces: experiments
and theory}, Surf.\ Sci.\ Reports, 34 (1999), pp.\ 171--294.

\bibitem{kallunkikrug03} {\sc J. Kallunki and J. Krug}, {\em Effect of kink-rounding barriers
on step edge fluctuations}, Surf.\ Sci.\ Lett., 523 (2003), pp.\ L53--L58.

\bibitem{krishnamacharietal96} {\sc B. Krishnamachari, J. McLean, B. Cooper, and J. Sethna},
{\em Gibbs-Thomson formula for small island sizes: Corrections for high vapor densities},
Phys.\ Rev.\ B, 54 (1996), pp.\ 8899--8907.

\bibitem{kuktabat02} {\sc R.~V. Kukta and K. Bhattacharya}, {\em A micromechanical model of surface steps},
J.\ Mech.\ Phys.\ Solids, 50 (2002), pp.~615--649.

\bibitem{landau} {\sc L.~D. Landau and E.~M. Lifshitz}, {\em Statistical Physics}, Pergamon Press,
Oxford, UK, 1968.

\bibitem{margetiskohn06} {\sc D. Margetis and R.~V. Kohn}, {\em 
Continuum theory of interacting steps on crystal surfaces in 2+1 dimensions}, 
Multisc.\ Model.\ Simul., 5 (2006), pp.\ 729--758.

\bibitem{ozdemir90} {\sc M. Ozdemir and A. Zangwill}, {\em Morphological equilibration of a corrugated
crystalline surface}, Phys.\ Rev.\ B, 42 (1990), pp.\ 5013--5024.

\bibitem{pengetal99} {\sc D. Peng, S. Osher, B. Merriman, and H.-K. Zhao},
{\em The geometry of Wulff crystal shapes and its relation with Riemann problems},
Contemporary Math.\, 238 (1999), pp.\ 251--303.

\bibitem{rottmanwortis84} {\sc C. Rottman and M. Wortis}, 
{\em Statistical mechanics of equilibrium crystal shapes: Interfacial phase diagrams and
phase transitions}, Phys.\ Rep., 103 (1984), pp.\ 59--79.

\bibitem{rowlinsonwidom82} {\sc J.~S. Rowlinson and B. Widom}, {\em Molecular Theory of Capillarity}, Chap.\ 2,
Clarendon Press, Oxford, 1982.

\bibitem{schwoebelshipsey66} {\sc R.~L. Schwoebel and E.~J. Shipsey}, 
{\em Step motion on crystal surfaces}, J.\ Appl.\ Phys., 37 (1966), pp.\ 3682--3686.

\bibitem{shenoy04} {\sc V.~B. Shenoy and C.~V. Ciobanu},
{\em Orientation dependence of the stiffness of surface steps: an analysis based on anisotropic
elasticity}, Surf.\ Sci., 554 (2004), pp.\ 222--232.

\bibitem{siegeletal04} {\sc M. Siegel, M.~J. Miksis, and P.~W. Voorhees}, {\em Evolution of material voids
for highly anisotropic surface energy}, J.\ Mech.\ Phys.\ Solids, 52 (2004), pp.\ 1319--1353.

\bibitem{spencer04} {\sc B.~J. Spencer}, {\em Asymptotic solutions for the equilibrium crystal shape
with small corner energy regularization}, Phys.\ Rev.\ E, 69 (2004), 011603.

\bibitem{stasevich_thesis} {\sc T.~J. Stasevich}, {\em Modeling the Anisotropy of Step 
Fluctuations on Surfaces: Theoretical Step Stiffness Confronts Experiment}, Ph.D. Thesis, 
University of Maryland, College Park, 2006.

\bibitem{stasevich06} {\sc T.~J. Stasevich and T.~L. Einstein}, {\em Analytic 
formulas for the orientation dependence of step stiffness and line tension: Key ingredients for numerical 
modeling}, Multisc.\ Model.\ Simul., 6 (2007), pp.\ 90--104.

\bibitem{stasevichetal04} {\sc T.~J. Stasevich, T.~L. Einstein, R.~K.~P. Zia, M. 
Giesen, H. Ibach, and F. Szalma}, {\em Effects of next-nearest-neighbor interactions on the 
orientation dependence of
step stiffness: Reconciling theory with experiment for Cu(001)}, Phys.\ Rev.\ B, 
70 (2004), 245404.

\bibitem{stasevichetal05} {\sc T.~J. Stasevich, H. Gebremariam, T.~L. Einstein, M. 
Giesen, C. Steimer,
and H. Ibach}, {\em Low-temperature orientation dependence of step stiffness on 
111 surfaces}, Phys.\ Rev.\ B, 71 (2005), 245414.

\bibitem{szalmaetal05} {\sc F. Szalma, H. Gebremariam, and T.~L. Einstein}, {\em Fluctuations, line tensions,
and correlation times of nanoscale islands on surfaces}, Phys.\ Rev.\ B, 71 (2005), 035422.

\bibitem{tanakaetal97} {\sc S. Tanaka, N.~C. Bartelt, C.~C. Umbach, R.~M. Tromp, and J.~M. Blakely},
{\em Step permeability and the relaxation of biperiodic gratings on Si(001)},
Phys.\ Rev.\ Lett., 78 (1997), pp.\ 3342--3345.

\bibitem{taylor74} {\sc J.~E. Taylor}, {\em Existence and structure of solutions to class of non-elliptic 
variational problems}, Symposia Mathematica, 14 (1974), pp.\ 499--508.

\bibitem{tersoffetal97} {\sc J. Tersoff, M.~D. Johnson, and B.~G. Orr}, {\em Adatom densities on GaAs:
Evidence for near-equilibrium growth}, Phys.\ Rev.\ Lett., 78 (1997), pp.\ 282--285.

\bibitem{williamsbartelt96} {\sc E.~D. Williams and N.~C. Bartelt}, {\em Thermodynamics and statistical
mechanics of surfaces}, in Handbook of Surface Science, Vol.\ 1: Physical Structure, W.~N. Unertl, ed., 
Elsevier, The Netherlands, 1996, pp.\ 51--99.

\bibitem{wulff1901} {\sc G. Wulff}, {\em Zur Frage der Geschwindigkeit des Wachsthums und der Aufl\"osung der Krystallflachen},
Zeitschrift f\"ur Krystallographie und Minerologie, 34 (1901), pp.\ 449--530.


\end{thebibliography}
\end{document}